\newtheorem{theorem}{Theorem}
\begin{document}
%
\title{Exact Indexing of Time Series under \\ Dynamic Time Warping}
%
%
%
%

\author{Zhengxin~Li
\IEEEcompsocitemizethanks{\IEEEcompsocthanksitem Z. Li was with the College
of Equipment Management and UAV Engineering, Air Force Engineering University, Xi'an 710051, Shaanxi, P. R. China.
He is currently a postdoctor in the School of Computer Science
and Center for OPTical IMagery Analysis and Learning (OPTIMAL),
Northwestern Polytechnical University.\protect\\
E-mail: zhengxinli@nwpu.edu.cn
}
\thanks{Manuscript first submitted on July 9, 2019.}}


%
%

\markboth{Journal of \LaTeX\ Class Files,~Vol.~14, No.~8, August~2015}%
{Shell \MakeLowercase{\textit{et al.}}: Bare Demo of IEEEtran.cls for Computer Society Journals}
%



\IEEEtitleabstractindextext{%
\begin{abstract}

Dynamic time warping (DTW) is a robust similarity measure of time series.
However, it does not satisfy triangular inequality and has high computational complexity,
severely limiting its applications in similarity search on large-scale datasets.
Usually, we resort to lower bounding distances to speed up similarity search under DTW.
Unfortunately, there is still a lack of an effective lower bounding distance
that can measure unequal-length time series and has desirable tightness.
In the paper, we propose a novel lower bounding distance LB\_Keogh$^+$,
which is a seamless combination of sequence extension and LB\_Keogh.
It can be used for unequal-length sequences and has low computational complexity.
Besides, LB\_Keogh$^+$ can extend sequences to an arbitrary suitable length,
without significantly reducing tightness.
Next, based on LB\_Keogh$^+$, an exact index of time series under DTW is devised.
Then, we introduce several theorems and complete the relevant proofs
to guarantee no false dismissals in our similarity search.
Finally, extensive experiments are conducted on real-world datasets.
Experimental results indicate that
our proposed method can perform similarity search of unequal-length sequences
with high tightness and good pruning power.

\end{abstract}

\begin{IEEEkeywords}
Time Series, Similarity Search, Dynamic Time Warping, Indexing, Lower Bounding Distance.
\end{IEEEkeywords}}

\maketitle

\IEEEdisplaynontitleabstractindextext

%
\IEEEpeerreviewmaketitle

\IEEEraisesectionheading{\section{Introduction}\label{sec:introduction}}

\IEEEPARstart{W}{ith} the rapid development of information technology,
time series pervades almost every field of human activity,
such as finance~\cite{Sezer2018Algorithmic}, traffic~\cite{Hong2018A}, medicine~\cite{Pradhan2017Association},
meteorology~\cite{Wang2018A}, hydrology~\cite{Vesakoski2017Arctic}, multimedia~\cite{Zhao2017Real}, \emph{etc}.
Similarity search is one of the most fundamental problems in time series data mining~\cite{Qin2018Salient}.
It can be applied in many scenarios, for example, searching for stocks with similar fluctuations~\cite{Rubio2017Improving},
looking for patients with similar EEG~\cite{Chaovalitwongse2006EEG}.

To mention about similarity search, we must start with similarity measure.
The most common similarity measures of time series are Euclidean distance and DTW distance.
Euclidean distance is parameter-free and has linear complexity,
but it is sensitive to shifting and scaling in the time axis~\cite{2017articleDTW}.
DTW is able to handle shifting and scaling
by searching an optimal match between the points of two sequences.
Besides, DTW can measure the similarity of time series with different lengths,
and achieve high matching precision~\cite{Jeong2011Weighted}.
In spite of the consideration of dozens of alternatives,
there is an increasing evidence that DTW distance is the best measure in most domains~\cite{Ding2008}.

However, the high computational complexity of DTW limits the applications of similarity search on large-scale datasets.
Therefore, research on efficient similarity search under DTW is of importance in both theory and practice.
In this paper, we propose a novel similarity search under DTW,
which can effectively improve search efficiency of time series with different lengths and guarantee no false dismissals.

The remainder of the paper is organized as follows.
In Section~\ref{sec:background}, we formulate the similarity search problem,
give a brief review of related work and state our motivation for this work.
In Section~\ref{sec:proposed method}, we propose a novel method of sequence extension,
and combine it with LB\_Keogh to form our lower bounding distance LB\_Keogh$^+$.
In Section~\ref{sec:similarity search}, we devise an efficient index of time series
and give the procedure of similarity search under DTW.
In Section~\ref{sec:experiments}, we conduct an extensive experiment to evaluate the validity of the proposed method.
Finally, we draw the conclusions and present future work in Section~\ref{sec:conclusions}.

\section{Background}\label{sec:background}

A time series is a continuous set of observations $x(t)$ ($t=1,2,\cdots ,n$),
arriving in time sequence~\cite{gorecki2018classification}.
Since time series from a dataset usually have the same time intervals,
they can be simply denoted as $X=(x_1,x_2,\cdots,x_n)$,
where $n$ is the length of the sequence.

Given two time series $X=(x_1,x_2,\cdots,x_n)$, $Y=(y_1,y_2,\cdots,y_m)$,
their DTW distance is defined as \cite{berndt1994using}:
\begin{equation}\label{eq:dtw definition}
\begin{split}
    & D_{dtw}(X,Y)= D_{base} (x_1 ,y_1 )+ \min \left\{ {\begin{array}{l}
 D_{dtw}(X,Y[2:-]) \\
 D_{dtw}(X[2:-],Y) \\
 D_{dtw}(X[2:-],Y[2:-]) \\
 \end{array}} \right.  \\
\end{split}
\end{equation}
where $D_{base}(x_1,y_1)$  is the base distance:
\begin{eqnarray}
&& D_{base}(x_i,y_j)=\left| {x_{i}-y_{j} } \right|
\end{eqnarray}

In the paper, we mainly focus on $\epsilon$-range search, which can be described as:
given a query sequence $Q$ and a time series dataset \{$C_1,C_2,\dots,C_k$\},
we need to retrieve all the sequences $C_i$ ($1 \leq i \leq k$) from the dataset,
such that $D_{dtw} (Q,C_i) \leq \epsilon$.

Since DTW has high computational complexity,
the retrieval efficiency of sequential scan is usually unacceptable.
Even worse, DTW does not satisfy triangular inequality,
making it difficult to devise appropriate index of time series under DTW distance.
Most methods devise their lower bounding distances to build index of time series.
Then, the index and lower bounding distances are used to speed up similarity search under DTW.
The key of these methods is their lower bounding distances.

There are three desirable properties of lower bounding distances~\cite{Nguyen2012Comparing}.
It does not incur false dismissals: lower bounding distances must be smaller than or equal to DTW distance.
It must be fast to compute: we would like their computational complexity to be linear in the length of the sequences.
It must be relatively tight: the lower bounding distance is close to DWT distance.

Among existing lower bounding distances, LB\_Kim~\cite{Kim2001An}, LB\_Yi~\cite{yi1998efficient}
and LB\_Keogh~\cite{keogh2005exact} are the most representative techniques.
They all have linear computational complexity and can guarantee no false dismissals.

LB\_Kim uses $L_\infty$ instead of $L_1$ or $L_2$ as its base distance.
Unlike most of other methods, the DTW distance defined by LB\_Kim is not the sum of the base distance.
Based on its defined DTW distance, LB\_Kim extracts four points (the first and last points, the maximum and minimum points) from each sequence,
and use these feature points to calculate the lower bounding distance.
LB\_Kim can be used for sequences with unequal length.

Given two time series, LB\_Yi chooses a sequence as the criterion.
Then, it extracts some points of the other sequence to calculate the lower bounding distance,
such that these extracted points are larger than the maximum value, or smaller than the minimum value of the criterion sequence.
LB\_Yi can also handle unequal-length sequences.

LB\_Keogh extracts the upper and lower boundary sequences from the query sequence $Q$,
by making use of global or local constraint of the warping path.
The area surrounded by the upper and lower boundary sequences is called envelope.
Based on these points of $C_i$ not falling into the envelope,
LB\_Keogh calculates the lower bounding distance between $Q$ and $C_i$.

LB\_Keogh has been verified to have higher tightness and pruning power than LB\_Kim and LB\_Yi.
Therefore, it has been recognized as the best one and attracts extensive attention.
Since then, Zhu~\cite{zhu2003warping}, Zhou~\cite{zhou2011boundary} and Li~\cite{Li2014Extensions}, \emph{etc}.
respectively proposed their improvements on LB\_Keogh to further enhance tightness and pruning power.

However, LB\_Keogh and its various improved methods can only deal with time series of the same length,
making it look slightly imperfect.
Just this defect alone is enough to seriously limit its practical application.
Because it is difficult to demandingly require the same length of time series in the real world.
In addition, the defect of LB\_Keogh forces DTW distancec to abandon its unique advantage
that it can measure unequal-length sequences.

From this perspective,
we try to propose a novel method of sequence extension to remove this defect of LB\_Keogh;
and then, we give the procedure of index building and similarity search;
finally, relevant proofs are completed to guarantee no false dismissals of our proposed method.
Basic notations in the work are summarized in Table \ref{tab:notations}.

\section{Proposed lower bounding distance of DTW}\label{sec:proposed method}

In this section, the calculation mechanism of DTW distance is analyzed.
Then, we propose a novel lower bounding distance LB\_Keogh$^+$,
which is a seamless combination of sequence extension and LB\_Keogh.
After that, we introduce several theorems and complete the relevant proofs
to guarantee no false dismissals of LB\_Keogh$^+$.

\begin{table}[H]
\centering
\caption{The notations in this work.}\label{tab:notations}
\begin{tabular}{p{1.2cm}<{\centering}|p{6.5cm}<{\centering}}
\hline
$D_{dtw}$             &  DTW distance \\ \hline
$D_{base}$            &  Base distance of DTW \\ \hline
$X[i:j]$              &  Subsequence of $X$ from the $i$-th to the $j$-th element  \\ \hline
$X^+$                 &  The extended sequence of $X$ by our proposed method  \\ \hline
$\bar{X}$             &  PAA form of a sequence $X$  \\ \hline
$Q$,\ $Q^+$           &  Query sequence and its extended sequence \\ \hline
$C$,\ $C^+$           &  Candidate sequence and its extended sequence \\ \hline
$M$                   &  Warping matrix \\ \hline
$W$                   &  Warping path \\ \hline
$w_k$                 &  The $k$-th element of a warping path \\ \hline
$U$,\ $L$             &  The upper and lower boundary sequences of $Q$  \\ \hline
$U^+$,\ $L^+$         &  The upper and lower boundary sequences of $Q^+$  \\ \hline
$Lmax$                &  The length after sequence extension  \\ \hline
$N$                   &  The number of sequences in a dataset \\ \hline
\end{tabular}
\end{table}


\subsection{The warping path of dynamic time warping}

To calculate DTW distance of $Q=(q_1,q_2,\cdots,q_n)$ and $C=(c_1,c_2,\cdots,c_m)$,
we usually construct a warping matrix $M \in \mathbb{R}^{n \times m}$~\cite{Salvador2007Toward},
where the element $M_{ij}$ is the base distance between $q_i$ and $c_j$.
A warping path $W$ is a set of elements in the warping matrix, as illustrated in Fig.~\ref{fig:the warping path}(c),
which is denoted as:
\begin{equation}\label{eq:warping path}
\begin{split}
    & W=w_1, w_2, \ldots, w_k, \ldots, w_K \\
    & max(n,m)\leq K < n+m-1.
\end{split}
\end{equation}
where $w_k=(i,j)_k$ denotes the matching relationship between $q_i$ and $c_j$.
Thus, a warping path defines a kind of mapping between all the points of $Q$ and $C$,
as illustrated in Fig.~\ref{fig:the warping path}(b)-(c).
Besides, a warping path is subject to the following constraints~\cite{Morel2017Time}.

Boundary condition: $w_1=(1, 1)_1$ and $w_K=(n, m)_K$.
An warping path should start from ($q_1$,$c_1$) and finish with ($q_n$,$c_m$).

Monotonicity: Given $w_k=(a, b)_k$ and $w_{k+1}=(a', b')_{k+1}$, we have $a'-a\geq 0$ and $b'-b\geq 0$.
This requires that a warping path must increase monotonously in time dimension.

Continuity: Given $w_k=(a, b)_k$ and $w_{k+1}=(a', b')_{k+1}$, we have $a'-a\leq 1$ and $b'-b\leq 1$.
This restricts the allowable steps in a warping path to adjacent elements (including diagonally adjacent elements).

\begin{figure}[htbp]
  \centering
  \includegraphics[width=0.48\textwidth]{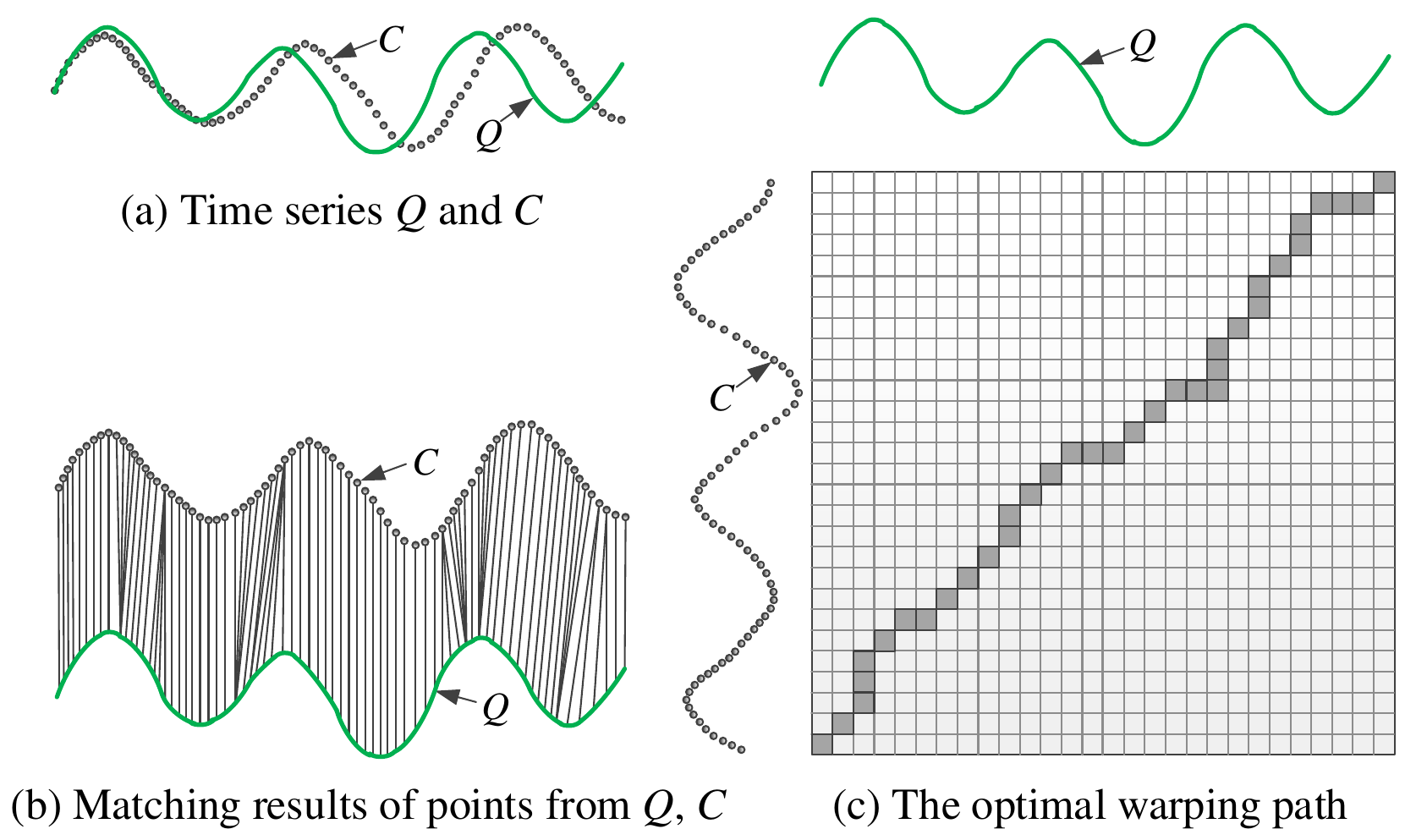}
  \caption{A warping path in the warping matrix~\cite{keogh2005exact}.}
  \label{fig:the warping path}
\end{figure}

For two sequences $Q$ and $C$,
there are several warping paths that satisfy the above three conditions.
DTW corresponds to the optimal path that minimizes the total cumulative distance:
\begin{equation}\label{eq:optimal warping path}
 D_{dtw}(Q,C) = min \{ \sum\nolimits_{k=1}^K {D_{base}{(w_k)}} \}.
\end{equation}
where $w_k=(i,j)_k$ and $D_{base}{(w_k)}$ is the base distance between $q_i$ and $c_j$.

\begin{theorem} \label{theorem:1}
  Given two sequences $Q$ and $C$,
  there is a warping path $W'=(w'_1,w'_2, \cdots,w'_{K'})$ in the warping matrix.
  If $\sum_{k=1}^{K'} D_{base}(w'_k)=\alpha$, then we have $D_{dtw}(Q,C)\leq\alpha$.
\end{theorem}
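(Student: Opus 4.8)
The plan is to argue directly from the definition of $D_{dtw}$ as the minimum cumulative cost over all valid warping paths, as expressed in Equation~\eqref{eq:optimal warping path}. The statement is essentially a restatement of the fact that the minimum of a set is a lower bound for every element of that set, so the bulk of the work is simply verifying that $W'$ is a legitimate competitor in that minimization.

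First I would recall that, by Equation~\eqref{eq:optimal warping path}, $D_{dtw}(Q,C) = \min\{\sum_{k=1}^{K} D_{base}(w_k)\}$, where the minimum ranges over all warping paths $W = w_1,\ldots,w_K$ in the warping matrix $M$ that satisfy the boundary, monotonicity, and continuity constraints. Next I would observe that $W' = (w'_1,\ldots,w'_{K'})$ is, by hypothesis, \emph{a warping path in the warping matrix}, hence it satisfies those three constraints and therefore belongs to the feasible set over which the minimum is taken. Consequently its cumulative cost $\sum_{k=1}^{K'} D_{base}(w'_k)$ is one of the values among which $D_{dtw}(Q,C)$ is the smallest, so $D_{dtw}(Q,C) \leq \sum_{k=1}^{K'} D_{base}(w'_k)$. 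Since the right-hand side equals $\alpha$ by assumption, we conclude $D_{dtw}(Q,C) \leq \alpha$, which is exactly the claim.

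The only subtlety — and the part I would be most careful about — is the word ``warping path'' in the hypothesis: the theorem is vacuous or false if $W'$ is merely an arbitrary sequence of matrix cells, so one must read ``warping path'' as carrying the full set of constraints listed earlier (boundary condition, monotonicity, continuity). Under that reading there is no real obstacle; the result follows immediately. If one instead wanted to be fully self-contained, one could prove it by induction on $K'$ using the recursive form of $D_{dtw}$ in Equation~\eqref{eq:dtw definition}, showing at each step that peeling off $w'_1 = (1,1)$ from $W'$ leaves a valid warping path for one of the three subproblem pairs and that the corresponding recursive $\min$ can only decrease the total; but the one-line argument from Equation~\eqref{eq:optimal warping path} is cleaner and is the route I would take.
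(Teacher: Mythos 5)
Your proposal is correct and follows essentially the same route as the paper's proof: both rely on Eq.~\eqref{eq:optimal warping path} defining $D_{dtw}(Q,C)$ as the minimum cumulative base distance over all feasible warping paths, so the cost of the given path $W'$ (namely $\alpha$) can only be an upper bound. The paper merely phrases this by naming the optimal path $W$ and comparing its cost to that of $W'$, which is the same one-line minimality argument you give.
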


\begin{proof}
  In the warping matrix formed by $Q$ and $C$,
  we can find an optimal warping path $W=(w_1,w_2, \cdots,w_K)$.
  According to the definition of the optimal warping path,
  we can derive the following inequality:
\begin{equation}\label{eq:warping path}
    \sum_{k=1}^{K} D_{base}(w_k) \leq \sum_{k=1}^{K'} D_{base}(w'_k).
\end{equation}

  According to Eq.~\eqref{eq:optimal warping path}, we can further infer
  $D_{dtw}(Q,C) = \sum\nolimits_{k=1}^K {D_{base}{(w_k)}}$.
  Because $\sum_{k=1}^{K'} D_{base}(w'_k)=\alpha$, we can derive $D_{dtw}(Q,C)\leq\alpha$ holds.
\end{proof}

\subsection{Additional constraints on a warping path}

Some warping paths satisfying the above constraints may have pathological shape,
where a small section of one sequence maps onto a large section of another.
To avoid these pathological shapes,
global and local constraints are introduced to define the feasible scope of a warping path, called the warping window.

Sakoe-Chiba band and Itakura parallelogram are the most frequently used global constraints,
as illustrated in Fig.~\ref{fig:global constraints}.
Their warping windows are respectively a band and a parallelogram in the diagonal direction.
Local constraints define the permissible steps by the current position of a warping path.
In some cases, they can be reinterpreted as global constraints.
Here, we do not elaborate further.

\begin{figure}[!htbp]
\centering
\subfloat[Sakoe-Chiba band]{\includegraphics[width= 4.4 cm, height = 4.4 cm ]{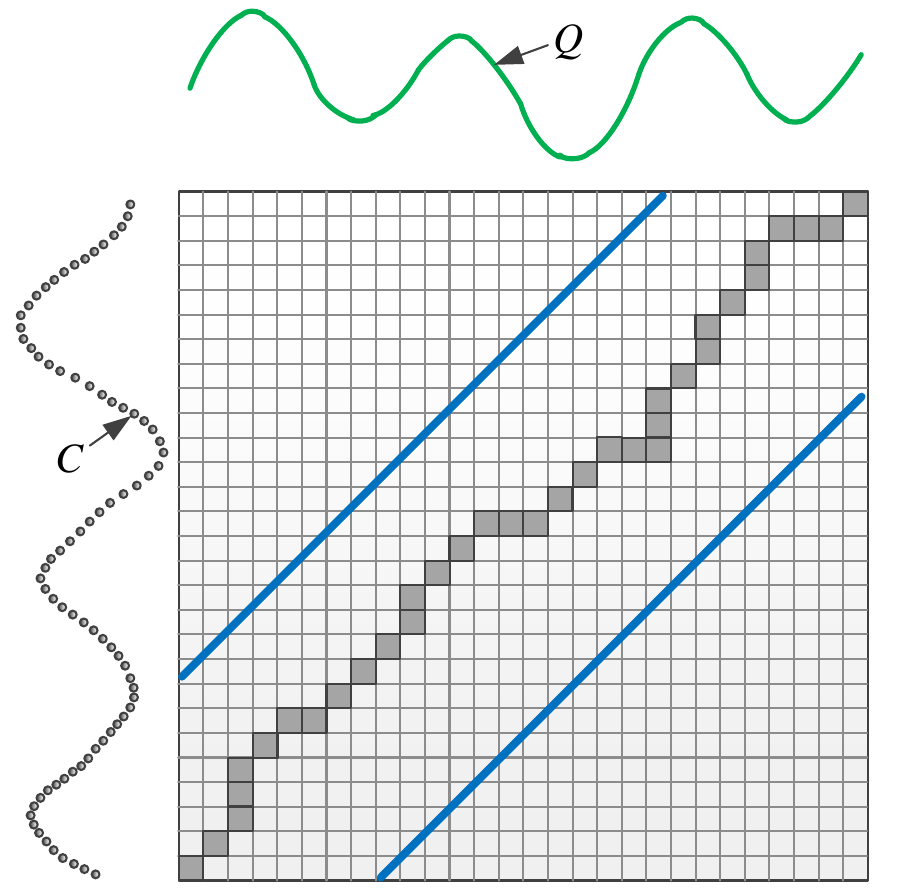}}
~~\subfloat[Itakura-Parallelogram]{\includegraphics[width= 4.4 cm, height = 4.4 cm ]{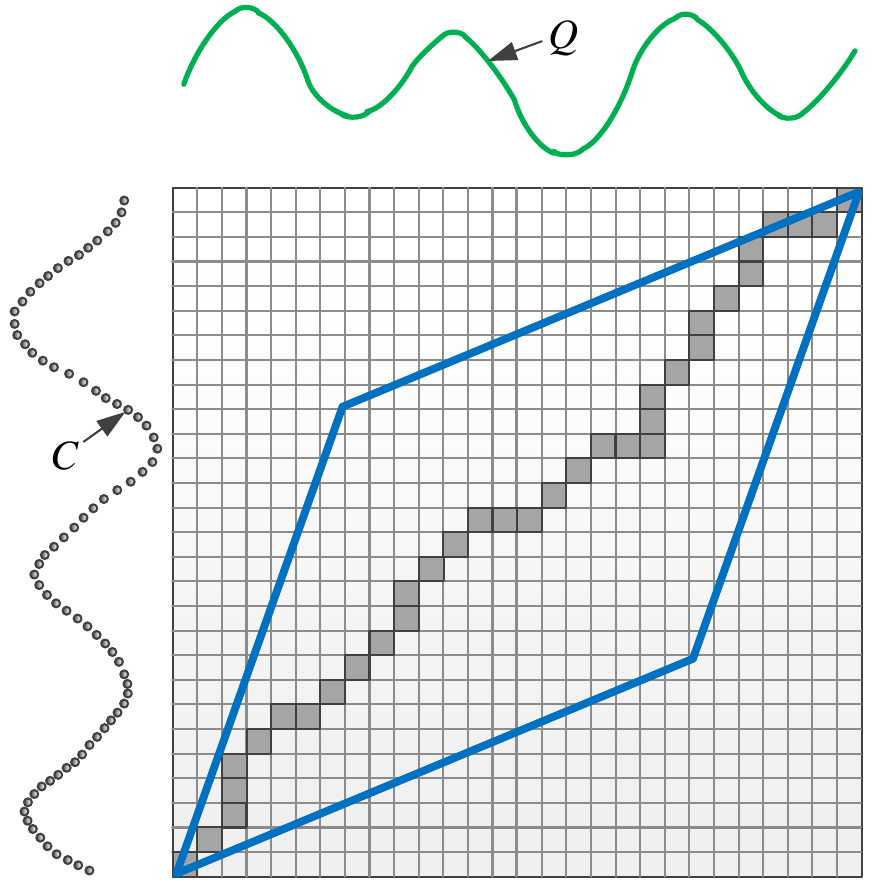}}
\caption{Global constraints on warping path~\cite{keogh2005exact}.}
\label{fig:global constraints}
\end{figure}

Without loss of generality, we mainly involve Sakoe-Chiba band in this paper,
which seems to be the most common constraint used in practice~\cite{Rabiner1978Considerations},~\cite{Dynamic1978programming} .
It constrains every element $w_k=(i, j)_k$ of a warping path such that $j-r\leq i \leq j+r$,
where $r$ is a constant specified the matching range of each point.
Thus, the difference of the lengths of any two sequences should satisfy the following theorem.

\begin{theorem} \label{theorem:2}
  Given two sequences $Q=(q_1, q_2, \cdots, q_n)$, $C=(c_1, c_2, \cdots, c_m)$,
  every warping path $W$ is constrained by Sakoe-Chiba band with the constant $r$.
  When we calculate $D_{dtw}(Q,C)$, the length difference between $Q$ and $C$ should not be greater than $r$.
\end{theorem}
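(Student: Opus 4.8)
The plan is to extract the claimed inequality directly from the boundary condition combined with the Sakoe-Chiba restriction, and then to explain why this inequality is a genuine prerequisite for $D_{dtw}(Q,C)$ to be meaningful in the first place.

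First I would invoke the boundary condition: any warping path $W=w_1,w_2,\ldots,w_K$ of $Q$ and $C$ ends at $w_K=(n,m)_K$. Since $W$ is constrained by the Sakoe-Chiba band with constant $r$, every element $w_k=(i,j)_k$ satisfies $j-r\leq i\leq j+r$; specializing this to $k=K$ gives $m-r\leq n\leq m+r$, that is, $|n-m|\leq r$.

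Second, I would tie this back to the definition of the DTW distance. By Eq.~\eqref{eq:optimal warping path}, $D_{dtw}(Q,C)$ is the minimum of $\sum_{k=1}^{K}D_{base}(w_k)$ taken over all warping paths obeying the boundary, monotonicity, continuity and band constraints. The computation above shows that if $|n-m|>r$ then no such path can exist, because a band of half-width $r$ cannot contain the corner element $(n,m)$; the minimum would then be over the empty set and $D_{dtw}(Q,C)$ would be undefined. Consequently, whenever we actually compute $D_{dtw}(Q,C)$ under a Sakoe-Chiba band with constant $r$, the length difference $|n-m|$ must not exceed $r$, which is exactly the assertion of the theorem.

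I do not expect any real obstacle: the argument requires no nontrivial estimates and no case analysis. The only point that deserves attention is to remember that the band constraint is imposed on \emph{every} element of the warping path — in particular on its endpoint $(n,m)$ — and to note that the monotonicity and continuity constraints impose no additional obstruction here, so that the band inequality evaluated at $(n,m)$ is precisely the binding condition. In other words, the statement is essentially a one-line consequence of the boundary condition once the Sakoe-Chiba band is in force.
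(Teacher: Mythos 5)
Your proposal is correct and follows essentially the same route as the paper: both apply the Sakoe-Chiba constraint to the final element $(n,m)$ forced by the boundary condition to get $|n-m|\leq r$. The only difference is presentational — the paper wraps this in a reductio ad absurdum, while you derive the inequality directly and add the (sound) remark that for $|n-m|>r$ no admissible path exists at all.
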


\begin{proof}
  Reduction to absurdity is used to prove the theorem.

  We assume the length difference between $Q$ and $C$ is greater than $r$ when calculating $D_{dtw}(Q,C)$.
  That is, for the two sequences $Q=(q_1, q_2, \cdots, q_n)$ and $C=(c_1, c_2, \cdots, c_m)$,
  we have $n-m>r$ or $m-n>r$.

  According to the boundary condition of a warping path,
  $W$ should finish with ($q_n$,$c_m$),
  which means $q_n$ must match $c_m$.

  Because $W$ is also subject to the constraint of Sakoe-Chiba band.
  If $q_n$ matches $c_m$, we can infer $n-r\leq m \leq n+r$.
  Further, we have $n-m \leq r$ and $m-n \leq r$,
  which is just contradictory to the previous assumption.

  Therefore, we can deduce that the difference of the lengths of any two sequences should not be greater than $r$.
  \end{proof}

In the rest of this paper,
we suppose the length difference between a query sequence $Q$ and any candidate sequence $C$ should not be greater than $r$.

\begin{theorem} \label{theorem:3}
  Given two sequences $Q$ and $C$,
  under the constraints $r$ and $r^{'}$ of Sakoe-Chiba band,
  we respectively calculate their DTW distance $D_{dtw}(Q,C)$, $D_{dtw}^{'}(Q,C)$.
  If we have $r \leq r^{'}$,  then $D_{dtw}(Q,C) \geq D_{dtw}^{'}(Q,C)$ holds.
\end{theorem}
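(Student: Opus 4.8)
The plan is to show that every warping path admissible under the tighter constraint $r$ is also admissible under the looser constraint $r'$, so that the minimization defining $D_{dtw}'(Q,C)$ is taken over a superset of the paths available to $D_{dtw}(Q,C)$; a minimum over a larger set can only be smaller or equal. First I would recall that the Sakoe-Chiba band with constant $r$ restricts each element $w_k=(i,j)_k$ of a warping path to satisfy $j-r\leq i\leq j+r$, while all the other structural constraints (boundary condition, monotonicity, continuity) are independent of the band width. Let $\mathcal{W}_r$ denote the set of all warping paths for $Q$ and $C$ satisfying these structural constraints together with the band constraint for constant $r$, and similarly $\mathcal{W}_{r'}$ for $r'$.

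Next I would make the key observation: if $r\leq r'$, then for any $w_k=(i,j)_k$ with $j-r\leq i\leq j+r$ we automatically have $j-r'\leq j-r\leq i\leq j+r\leq j+r'$, so every path in $\mathcal{W}_r$ also lies in $\mathcal{W}_{r'}$; that is, $\mathcal{W}_r\subseteq\mathcal{W}_{r'}$. Then, using the characterization of DTW as a minimum of the cumulative base distance over admissible warping paths from Eq.~\eqref{eq:optimal warping path}, I would write
\begin{equation}
 D_{dtw}(Q,C)=\min_{W\in\mathcal{W}_r}\sum_{k=1}^{K}D_{base}(w_k)\geq\min_{W\in\mathcal{W}_{r'}}\sum_{k=1}^{K}D_{base}(w_k)=D_{dtw}'(Q,C),
\end{equation}
where the inequality holds because the right-hand minimum ranges over the larger set $\mathcal{W}_{r'}\supseteq\mathcal{W}_r$ and the objective $\sum_k D_{base}(w_k)$ is nonnegative and the same in both problems. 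This yields $D_{dtw}(Q,C)\geq D_{dtw}'(Q,C)$, as claimed.

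The only subtlety worth checking — and the place I would be most careful — is that widening the band does not invalidate the existence of an admissible path or alter the objective function: the set $\mathcal{W}_{r'}$ is nonempty whenever $\mathcal{W}_r$ is (indeed the diagonal-type path is always in $\mathcal{W}_r$ once Theorem~\ref{theorem:2} guarantees $|n-m|\leq r$), and the base distances $D_{base}(w_k)$ attached to each cell of the warping matrix depend only on $Q$ and $C$, not on $r$. With the feasible region monotone in the band width and the cost of each feasible path unchanged, the monotonicity of the optimal value is immediate, so I do not anticipate any genuine obstacle here — the argument is essentially the elementary fact that relaxing a constraint cannot increase the minimum.
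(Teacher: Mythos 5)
Your proposal is correct and follows essentially the same route as the paper: the paper takes the optimal warping path under the tighter constraint $r$, observes it remains feasible when the band is widened to $r'$, and invokes Theorem~\ref{theorem:1}, which is exactly your minimum-over-a-superset argument phrased through a specific path. Your explicit check that $j-r\leq i\leq j+r$ implies $j-r'\leq i\leq j+r'$ and that the objective is unchanged is a slightly more careful write-up of the same idea, not a different proof.
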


  \begin{proof}
  For two sequences $Q$ and $C$,
  we assume $W=(w_1,w_2, \cdots,w_K)$ is the optimal path under the constraint $r$ of Sakoe-Chiba band.
  According to Eq.~\eqref{eq:optimal warping path}, we have
  \begin{equation}\label{eq:r extension proof 1}
     D_{dtw}(Q,C) = \sum\nolimits_{k=1}^K {D_{base}{(w_k)}}
  \end{equation}

  When the width of the warping window expands from $r$ to $r^{'}$ ($r \leq r^{'}$),
  $W$ must be a feasible path in the warping matrix,
  as illustrated in Fig.~\ref{fig:global constraints}(a).

  That is, we can find a warping path $W$ under the constraint $r^{'}$.
  And we have deduced that Eq.~\eqref{eq:r extension proof 1} holds.
  According to Theorem~\ref{theorem:1}, we can infer Theorem~\ref{theorem:3} holds.
  \end{proof}

\subsection{Lower bounding distance}

Given a query sequence $Q=(q_1, q_2, \cdots, q_n)$,
its upper and lower boundary sequences $U = (u_1, u_2,\ldots, u_n)$ and $L = (l_1, l_2,\ldots, l_n)$ are respectively defined as:
\begin{equation}\label{eq:envelope}
\begin{split}
    & u_i = max\{q_{i-r}:q_{i+r} \} \\
    & l_i = min\{q_{i-r}:q_{i+r} \} \\
    & i-r\geqslant 1, \quad  i+r\leq n
\end{split}
\end{equation}
where $r$ is a constant that comes with the Sakoe-Chiba band.
The query sequence $Q$ is enclosed in the region formed by $U$ and $L$,
as illustrated in Fig.~\ref{fig:upper and lower boundary}.
The region between $U$ and $L$ is called envelope.

\begin{figure}[htbp]
  \centering
  \includegraphics[width=0.42\textwidth]{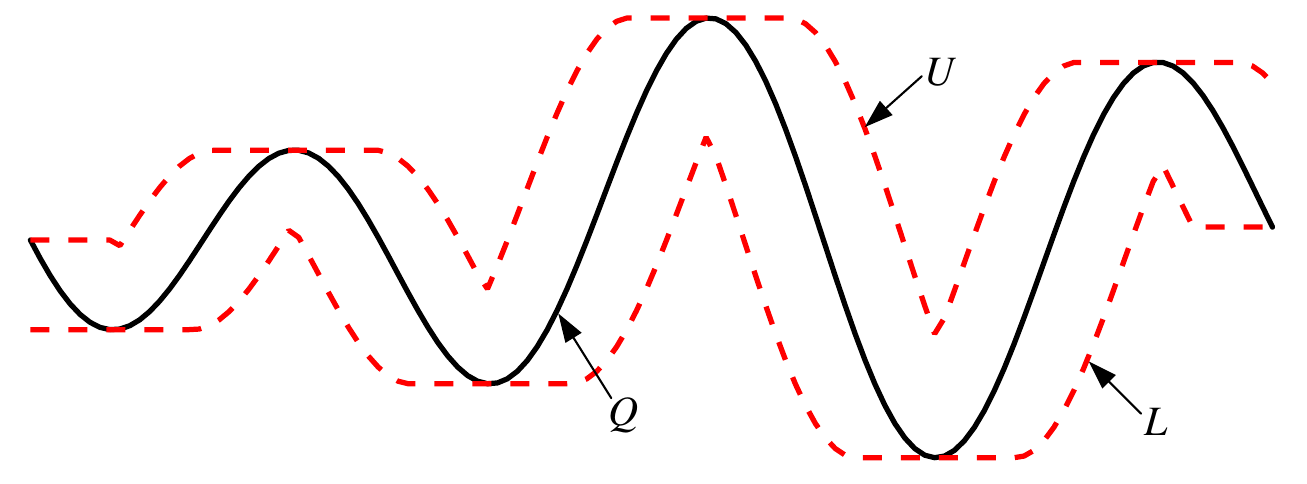}
  \caption{A query sequence and its upper and lower boundary sequences.}
  \label{fig:upper and lower boundary}
\end{figure}

For any sequence $C=(c_1, c_2, \cdots, c_m)$ in a dataset,
only if $m=n$, the lower bounding distance of DTW between $Q$ and $C$ can be defined as~\cite{keogh2005exact}:
\begin{equation}\label{eq:lb dtw}
\begin{split}
    & LB\_Keogh(Q,C)= \sum\limits_{i=1}^{n}  \left\{ {\begin{array}{l}
 |c_i-u_i|,  \ if \ c_i>u_i \\
 |c_i-l_i|,  \ if \  c_i<l_i \\
  0,  \quad otherwise \\
 \end{array}} \right.  \\
\end{split}
\end{equation}

LB\_Keogh$(Q,C)$ calculates the sum of the base distance
between any point of $C$ not falling into the envelope
and the corresponding point in the nearest boundary sequence.
Fig.~\ref{fig:the lower bounding distance} visualizes the meaning of LB\_Keogh$(Q,C)$,
where the shadow areas represent the parts that need to be cumulated by the base distance.

\begin{figure}[htbp]
  \centering
  \includegraphics[width=0.43\textwidth]{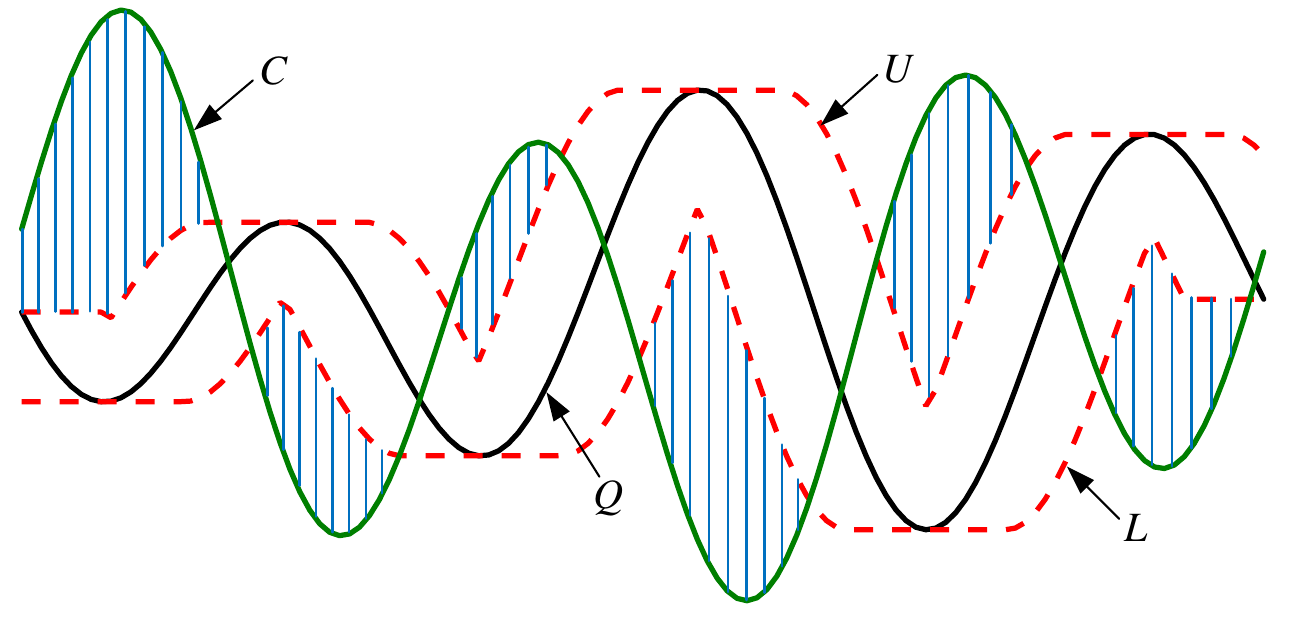}
  \caption{A visual intuition of LB\_Keogh$(Q,C)$}
  \label{fig:the lower bounding distance}
\end{figure}

In addition, LB\_Keogh satisfies the following theorem. Please refer to~\cite{keogh2005exact} for detailed proof.
\begin{theorem}\label{theorem:4}
For two sequences $Q$ and $C$ of the same length,
if every warping path is constrained by Sakoe-Chiba band with a constant $r$,
the inequality LB\_Keogh(Q,C)$\leq D_{dtw}(Q,C)$ holds.
\end{theorem}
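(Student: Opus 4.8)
The plan is to prove the inequality by exhibiting, for an optimal warping path realizing $D_{dtw}(Q,C)$, a way to ``charge'' every summand of LB\_Keogh$(Q,C)$ to a distinct entry of that path. First I would fix an optimal warping path $W=(w_1,\dots,w_K)$ with $w_k=(j_k,i_k)_k$, which by Eq.~\eqref{eq:optimal warping path} satisfies $D_{dtw}(Q,C)=\sum_{k=1}^{K}D_{base}(w_k)$, and then recall from the boundary and continuity conditions that every index $i\in\{1,\dots,n\}$ of $C$ occurs as the second coordinate $i_k$ of at least one path entry. For each such $i$ I would single out one entry $w_{k(i)}=(j(i),i)_{k(i)}$ of $W$ sitting in ``column'' $i$, i.e.\ one point $q_{j(i)}$ of $Q$ that the path matches with $c_i$.

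The second step is the per-point bound. Because $W$ respects the Sakoe--Chiba band of width $r$, the chosen match satisfies $|i-j(i)|\le r$, so $q_{j(i)}$ lies inside the (possibly end-truncated) window used to build the envelope at position $i$; by the definition of $U$ and $L$ in Eq.~\eqref{eq:envelope} this forces $l_i\le q_{j(i)}\le u_i$. A short three-way case check on the definition in Eq.~\eqref{eq:lb dtw} --- $c_i>u_i$, $c_i<l_i$, or $l_i\le c_i\le u_i$ --- then shows the $i$-th summand of LB\_Keogh$(Q,C)$ is at most $|c_i-q_{j(i)}|=D_{base}(w_{k(i)})$: for instance, when $c_i>u_i$ the summand is $c_i-u_i\le c_i-q_{j(i)}$, and in the ``otherwise'' case it is $0$. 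Summing over $i$ yields LB\_Keogh$(Q,C)\le\sum_{i=1}^{n}D_{base}(w_{k(i)})$.

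The last step, which I expect to be the only delicate point, is to check that the selected entries $w_{k(1)},\dots,w_{k(n)}$ are pairwise distinct, so that $\sum_{i=1}^{n}D_{base}(w_{k(i)})\le\sum_{k=1}^{K}D_{base}(w_k)$ follows from $D_{base}\ge 0$ together with $K\ge n$. This is where monotonicity and continuity of the warping path enter: as $k$ runs from $1$ to $K$ the column coordinate $i_k$ is nondecreasing and grows by at most one per step, hence each set $\{k : i_k=i\}$ is a nonempty contiguous block and these blocks partition $\{1,\dots,K\}$ over $i=1,\dots,n$; picking $k(i)$ from the $i$-th block therefore gives $n$ distinct indices. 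Chaining the three steps gives LB\_Keogh$(Q,C)\le\sum_{k=1}^{K}D_{base}(w_k)=D_{dtw}(Q,C)$. The main obstacle is exactly this injectivity bookkeeping --- ensuring no base distance along the path is charged twice; once it is in place, the per-point estimate is immediate from the Sakoe--Chiba constraint.
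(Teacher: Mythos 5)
Your proof is correct: charging each summand of LB\_Keogh$(Q,C)$ to a distinct entry of the optimal warping path via the Sakoe--Chiba constraint, then using nonnegativity of the base distance, is exactly the standard argument. The paper itself does not reproduce a proof of Theorem~\ref{theorem:4} --- it defers to the cited reference~\cite{keogh2005exact} --- and your argument is essentially the one given there, so no further comment is needed.
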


When LB\_Keogh is used for similarity search, Theorem~\ref{theorem:4} can ensure that there is no false dismissals.
Besides, it has low computational complexity and good tightness.
Therefore, LB\_Keogh is so far the best lower bounding distance of DTW.
However, it can only measure sequences of the same length,
severely limiting its application in practical scenarios.

\subsection{Technique of sequence extension}

In order to solve the intractable problem above,
we propose a novel method of sequence extension.
Given two sequences $Q=(q_1, q_2, \cdots, q_n)$, $C=(c_1, c_2, \cdots, c_m)$,
we add the same arbitrary constant $e$ after them to construct two sequences of the same length:
\begin{equation}\label{eq:sequence extension}
\begin{split}
    & Q^+=(q_1, q_2, \cdots, q_n, q_{n+1},\cdots,q_{Lmax}) \\
    & C^+=(c_1, c_2, \cdots, c_m, c_{m+1}, \cdots,c_{Lmax}) \\
    & q_{n+1}=\cdots=q_{Lmax}=e \\
    & c_{m+1}=\cdots=c_{Lmax}=e \\
    & Lmax>max(n,m)
\end{split}
\end{equation}
where the constraint $Lmax>max(n,m)$ means that we should add at least one element to the longer sequence.

From the external form,
the equal-length sequences $Q^+$ and $C^+$ can be directly used in LB\_Keogh.
More important,
we need to further analyze whether the proposed method can guarantee no false dismissals,
when $D_{dtw}(Q^+,C^+)$ is used for similarity search.

\begin{theorem} \label{theorem:5}
  Given two sequences $Q=(q_1, q_2, \cdots, q_n)$, $C=(c_1, c_2, \cdots, c_m)$,
  they are extended to the equal-length sequences $Q^+$, $C^+$ by Eq.~\eqref{eq:sequence extension}.
  Then, the following inequality holds:
  \begin{equation}\label{eq:path extension theorem 1}
    D_{dtw}(Q^+,C^+)\leq D_{dtw}(Q,C)
  \end{equation}
\end{theorem}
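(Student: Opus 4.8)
The plan is to exhibit an explicit warping path in the warping matrix of $Q^+$ and $C^+$ whose cumulative base distance equals $D_{dtw}(Q,C)$, and then invoke Theorem~\ref{theorem:1} to conclude that $D_{dtw}(Q^+,C^+)$ is at most that cumulative distance, hence at most $D_{dtw}(Q,C)$. First I would take an optimal warping path $W=(w_1,\ldots,w_K)$ for the original pair $(Q,C)$, so that by Eq.~\eqref{eq:optimal warping path} we have $D_{dtw}(Q,C)=\sum_{k=1}^{K} D_{base}(w_k)$. This path ends at $w_K=(n,m)$ by the boundary condition.

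Next I would extend $W$ to a path $W'$ in the $Lmax\times Lmax$ matrix of $(Q^+,C^+)$ by appending the diagonal segment $(n+1,m+1),(n+2,m+2),\ldots,(Lmax,Lmax)$ — provided $n=m$ so that this diagonal stays inside the Sakoe-Chiba band; in the general unequal-length case one instead appends steps that walk from $(n,m)$ to $(Lmax,Lmax)$ while respecting monotonicity and continuity, and since Theorem~\ref{theorem:2} guarantees $|n-m|\le r$ such a band-feasible completion exists. The key observation is that every appended element $(i,j)$ with $i>n$ and $j>m$ matches $q_i=e$ against $c_j=e$, so $D_{base}$ on each such element is $|e-e|=0$. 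Therefore $\sum_{k} D_{base}(w'_k)=\sum_{k=1}^{K} D_{base}(w_k)+0=D_{dtw}(Q,C)$. One should also check that $W'$ satisfies the boundary, monotonicity and continuity constraints: it starts at $(1,1)$ because $W$ does, it ends at $(Lmax,Lmax)$ by construction, and the junction between $w_K=(n,m)$ and the first appended step is a legal unit step.

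Finally, applying Theorem~\ref{theorem:1} to the path $W'$ with $\alpha=D_{dtw}(Q,C)$ yields $D_{dtw}(Q^+,C^+)\le D_{dtw}(Q,C)$, which is exactly Eq.~\eqref{eq:path extension theorem 1}. The main obstacle I anticipate is the bookkeeping for the unequal-length case: one must argue that the tail portion of the path connecting $(n,m)$ to $(Lmax,Lmax)$ can be chosen so that it never leaves the band $j-r\le i\le j+r$ while still only ever pairing $e$ with $e$ — this uses $|n-m|\le r$ in an essential way, and the cleanest treatment may simply restrict to the case $n=m$ first (which is all that is needed for the indexing application, where candidates are padded to a common length) and then remark that the argument carries over. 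Everything else is routine verification of the path constraints and the fact that constant-versus-constant contributes zero cost.
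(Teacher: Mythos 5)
Your proposal is correct and follows essentially the same route as the paper's own proof: take an optimal warping path $W$ for $(Q,C)$, append a tail through the extended coordinates where $q_i=e$ is matched only against $c_j=e$ (so the tail contributes zero cost), and invoke Theorem~\ref{theorem:1} with $\alpha=D_{dtw}(Q,C)$. In fact you are somewhat more careful than the paper, which does not explicitly verify that the appended tail stays inside the Sakoe-Chiba band or that it pairs only padded values; your observation that $|n-m|\le r$ (Theorem~\ref{theorem:2}) and a first diagonal step to $(n+1,m+1)$ make such a band-feasible, zero-cost completion possible fills that small gap.
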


\begin{proof}
For the original sequences $Q$ and $C$,
we assume that $W=(w_1,w_2, \cdots,w_K)$ is the optimal warping path,
as illustrated in Fig.~\ref{fig:equal-length extension}(a).
According to Eq.~\eqref{eq:optimal warping path}, we have
\begin{equation}\label{eq:path extension proof 1}
    D_{dtw}(Q,C) = \sum\nolimits_{k=1}^K {D_{base}{(w_k)}}
\end{equation}

For the extended equal-length sequences $Q^+$ and $C^+$,
we can simply construct a warping path $W^+$,
as illustrated in Fig.~\ref{fig:equal-length extension}(b).
\begin{equation}\label{eq:path extension proof 2}
    W^+=(w_1,w_2,\cdots,w_K,w_{K+1},\cdots,w_{K+p})
\end{equation}

The front part of $W^+$ is exactly the same as $W$,
contained in the red rectangle of Fig.~\ref{fig:equal-length extension}(b).
The back part of $W^+$ is formed by $(q_{n+1},\cdots,q_{Lmax})$ and $(c_{m+1},\cdots,c_{Lmax})$,
which are the extended parts of $Q^+$ and $C^+$,
as shown in the blue rectangle of Fig.~\ref{fig:equal-length extension}(b).

\begin{figure}[!htbp]
\centering
\subfloat[The optimal path of $Q$ and $C$]{\includegraphics[width= 3.2 cm, height = 4 cm ]{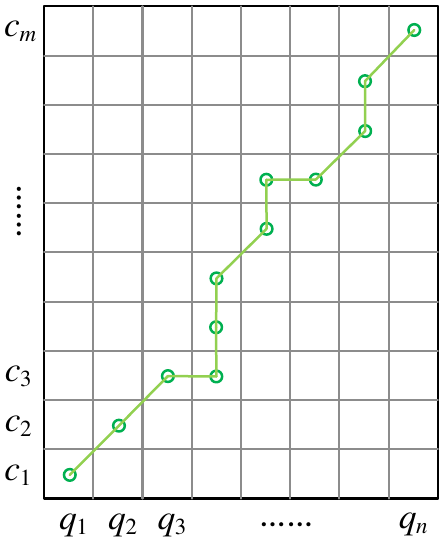}}
\subfloat[A warping path of $Q^+$ and $C^+$]{\includegraphics[width= 5.4 cm, height = 5.4 cm ]{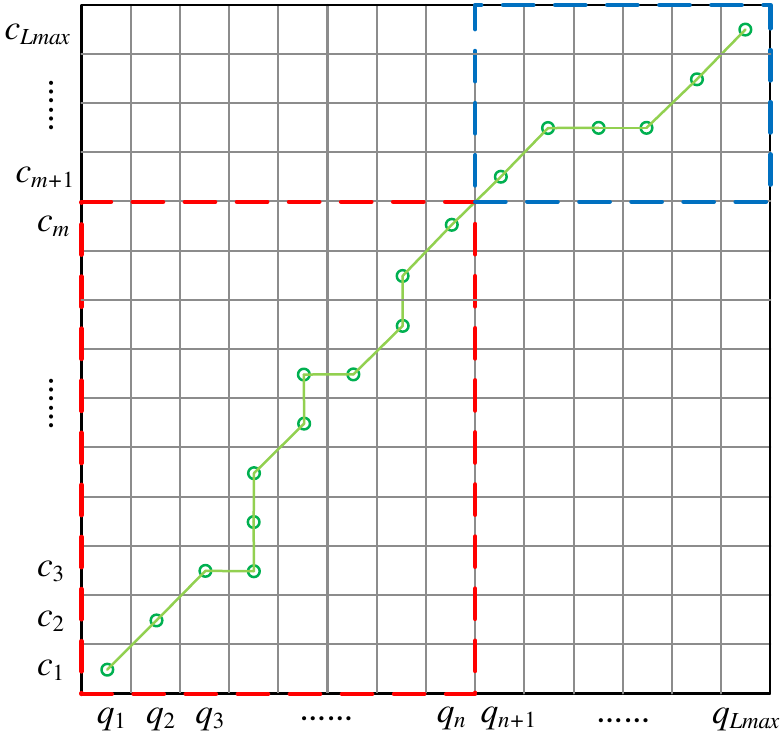}}
\caption{A warping path of the extended equal-length sequences.}
\label{fig:equal-length extension}
\end{figure}

Because $q_{n+1}=\cdots=q_{Lmax}=c_{m+1}=\cdots=c_{Lmax}=e$,
from $W^+$ we can infer:
\begin{equation}\label{eq:path extension proof 3}
    \sum_{k=K+1}^{K+p} D_{base}(w_k)=0
\end{equation}

For the whole warping path $W^+$, we have:
\begin{equation}\label{eq:path extension proof 4}
    \sum_{k=1}^{K+p} D_{base}(w_k)=\sum_{k=1}^{K} D_{base}(w_k)+\sum_{k=K+1}^{K+p} D_{base}(w_k)
\end{equation}

We substitute Eq.~\eqref{eq:path extension proof 1} and Eq.~\eqref{eq:path extension proof 3} into Eq.~\eqref{eq:path extension proof 4}:
\begin{equation}\label{eq:path extension proof 5}
    \sum_{k=1}^{K+p} D_{base}(w_k)=D_{dtw}(Q,C)
\end{equation}

That is, we can find a warping path $W^+$ in the warping matrix formed by $Q^+$ and $C^+$.
The total cumulative distance of $W^+$ is $D_{dtw}(Q,C)$.
According to Theorem~\ref{theorem:1},
we can deduce that Eq.~\eqref{eq:path extension theorem 1} holds.
\end{proof}

From Theorem~\ref{theorem:4},
we can infer the proposed method of sequence extension can guarantee no false dismissals.
Under the constraint $Lmax > max(n,m)$ in Eq.~\eqref{eq:sequence extension},
we need to further discuss another important problem:
how does the extended length affect similarity search?

\begin{theorem} \label{theorem:6}
  Given two sequences $Q=(q_1, q_2, \cdots, q_n)$, $C=(c_1, c_2, \cdots, c_m)$,
  they are extended to the equal-length sequences $Q^+$, $C^+$ by Eq.~\eqref{eq:sequence extension}.
  Then we further add the same constant $e$ after $Q^+$ and $C^+$,
  to obtain two new sequences of the same length:
  \begin{equation}\label{eq:sequence extension further}
  \begin{split}
    & Q^{++}=(q_1, q_2, \cdots, q_{Lmax}, q_{Lmax+1}, \cdots,q_{Lmax+t}) \\
    & C^{++}=(c_1, c_2, \cdots, c_{Lmax}, c_{Lmax+1}, \cdots,c_{Lmax+t}) \\
    & q_{Lmax}=q_{Lmax+1}=\cdots=q_{Lmax+t}=e \\
    & c_{Lmax}=c_{Lmax+1}=\cdots=c_{Lmax+t}=e, \quad t\geq 1 \\
  \end{split}
  \end{equation}

  Then, the following two inequalities hold:
  \begin{equation}\label{eq:path extension further theorem 1}
   D_{dtw}(Q^{++},C^{++})= D_{dtw}(Q^+,C^+)
  \end{equation}
  \begin{equation}\label{eq:path extension further theorem 2}
   LB\_Keogh(Q^{++},C^{++})=LB\_Keogh(Q^+,C^+)
  \end{equation}
\end{theorem}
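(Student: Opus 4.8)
The plan is to establish the two identities in turn, leaning on the earlier theorems wherever possible. For the DTW identity~\eqref{eq:path extension further theorem 1}, one half is free: since $Lmax+t>Lmax=\max(|Q^{+}|,|C^{+}|)$, the passage from $(Q^{+},C^{+})$ to $(Q^{++},C^{++})$ is itself an instance of the extension~\eqref{eq:sequence extension}, so Theorem~\ref{theorem:5}, applied with $Q^{+},C^{+}$ in the role of the original pair, gives $D_{dtw}(Q^{++},C^{++})\le D_{dtw}(Q^{+},C^{+})$. The substance is the reverse inequality. I would take an optimal warping path $W^{++}$ for $(Q^{++},C^{++})$ and let $(a,b)$ be its \emph{last} element with $a\le Lmax$ and $b\le Lmax$; from the boundary condition, monotonicity and continuity one checks that $a=Lmax$ or $b=Lmax$, since the next step must push one coordinate past $Lmax$. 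I would then form a warping path $W'$ for $(Q^{+},C^{+})$ by taking the prefix of $W^{++}$ up to $(a,b)$ and appending straight unit steps along row $a$ (if $a=Lmax$) or column $b$ (if $b=Lmax$) until $(Lmax,Lmax)$ is reached. That $W'$ meets the boundary, monotonicity and continuity conditions is immediate, and it stays inside the Sakoe--Chiba band because the appended indices lie between $b$ and $Lmax$ while $W^{++}$ already guarantees $|a-b|\le r$.

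The cost comparison is the heart of the matter. The prefix contributes the same amount to $W'$ and to $W^{++}$. For an appended step $(Lmax,j)$ with $b<j\le Lmax$ the base distance is $|q^{+}_{Lmax}-c^{+}_{j}|=|e-c^{+}_{j}|$, because $q^{+}_{Lmax}=e$; but $W^{++}$ must also visit column $j$, and by the choice of $(a,b)$ every such visit occurs at a row $i\ge Lmax$, where $q^{++}_{i}=e$, so $W^{++}$ pays exactly $|e-c^{+}_{j}|$ there as well. Summing over $j$, and discarding the remaining nonnegative terms that $W^{++}$ pays on columns $>Lmax$, yields $\sum_{k}D_{base}(w'_{k})\le\sum_{k}D_{base}(w^{++}_{k})=D_{dtw}(Q^{++},C^{++})$, whence Theorem~\ref{theorem:1} gives $D_{dtw}(Q^{+},C^{+})\le D_{dtw}(Q^{++},C^{++})$. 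The case $b=Lmax$ is symmetric, exchanging rows with columns and the roles of $q$ and $c$. Combined with the first half, this proves~\eqref{eq:path extension further theorem 1}.

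For the LB\_Keogh identity~\eqref{eq:path extension further theorem 2} I would argue termwise from~\eqref{eq:lb dtw}, reading the envelope in the usual way with clamping at the ends of the sequence so that every summand is defined. Three observations finish it: (i) for every $i\le Lmax$ the envelope of $Q^{++}$ at $i$ equals that of $Q^{+}$ at $i$ --- the windows can differ only when $i+r>Lmax$, and then the extra window entries of $Q^{++}$ all equal $e=q^{+}_{Lmax}$, a value already present in the window of $Q^{+}$, so neither $u_{i}$ nor $l_{i}$ moves; (ii) since $c^{++}_{i}=c^{+}_{i}$ for $i\le Lmax$, the $i$-th summand of LB\_Keogh is the same for both pairs when $i\le Lmax$; (iii) for $Lmax<i\le Lmax+t$ we have $c^{++}_{i}=e$, and $e=q^{++}_{i}$ lies in its own window, so $l^{++}_{i}\le e\le u^{++}_{i}$ and the $i$-th summand is $0$. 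Adding up, the first $Lmax$ summands reproduce LB\_Keogh$(Q^{+},C^{+})$ and the last $t$ vanish, which is~\eqref{eq:path extension further theorem 2}.

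I expect the reverse DTW inequality to be the main obstacle, because it is the one step that cannot simply be quoted from an earlier theorem and instead requires exhibiting a concrete path in the smaller warping matrix. The key insight that makes it work is that the ``tail'' block of the warping matrix of $(Q^{++},C^{++})$ is filled entirely with the constant $e$, so whatever an optimal path does inside that block can be rerouted onto the last row or column of the $(Q^{+},C^{+})$ matrix without paying any more; the rest is bookkeeping about warping-path constraints.
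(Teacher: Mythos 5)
Your proof is correct, but for the DTW identity~\eqref{eq:path extension further theorem 1} it takes a genuinely different route from the paper. The paper argues directly about the optimal matching of $Q^{++}$ and $C^{++}$: it splits the extended sequences into three contiguous segments and asserts that, because of the band constraint $r$, the matching on the first segment is unchanged, that the extended points act as ``self-replications'' of $q_{Lmax}$ and $c_{Lmax}$ on the middle segment so its cost equals the counterpart for $Q^{+},C^{+}$, and that the final all-$e$ segment costs zero; equality is then read off in one step. You instead prove two inequalities: the direction $D_{dtw}(Q^{++},C^{++})\leq D_{dtw}(Q^{+},C^{+})$ comes for free by observing that the second extension is itself an instance of Eq.~\eqref{eq:sequence extension} and quoting Theorem~\ref{theorem:5}, and the reverse direction is obtained by truncating an optimal path of the larger matrix at its last element with both coordinates at most $Lmax$, rerouting along the last row or column (where all entries equal $e$), checking feasibility and the Sakoe--Chiba condition, and invoking Theorem~\ref{theorem:1}. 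Your version is more rigorous and self-contained: the paper's per-segment claims (that the optimal matching on segment a is literally the same, and that segment b's cost equals its counterpart) are asserted rather than derived, whereas your column-by-column cost comparison actually establishes the needed bound; the price is a somewhat longer bookkeeping argument, while the paper's reasoning is shorter and more intuitive. For the LB\_Keogh identity~\eqref{eq:path extension further theorem 2} your argument is essentially the paper's, with the welcome extra care of justifying why the first $Lmax$ envelope values are unchanged (the appended window entries equal $e=q^{+}_{Lmax}$, already present in the window) rather than merely stating it.
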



\begin{proof}
For the original sequences $Q$ and $C$,
we assume their extended equal-length sequences are denoted as
$Q^+=(q_1, \cdots, q_n, q_{n+1},\cdots,q_{Lmax})$,
$C^+=(c_1, \cdots, c_m, c_{m+1}, \cdots,c_{Lmax})$.

Because $Lmax>max(n,m)$,
$Q^+$ and $C^+$ should add at least one element to the end of $Q$ and $C$.
Without loss of generality, we assume $m \geq n$
and add one element $e$ behind $C$ to get $C^+$.
Correspondingly, we add $m-n+1$ elements behind $Q$ to get $Q^+$.

Then we can get the optimal matching relations of all points from $Q^+$ and $C^+$,
as illustrated in Fig.~\ref{fig:extended proof}(a),
which correspond to the optimal warping path.
We get $D_{dtw}(Q^+,C^+)$ by calculating the sum of the base distance between all those matching points.

Further, we add the same number of elements behind $Q^+$ and $C^+$,
to form another extended equal-length sequences $Q^{++}$, $C^{++}$.
It's worth noting that $t$ can be any positive integer under the constraint $t\geq 1$,
which means that the length of sequence extension in Eq.~\eqref{eq:sequence extension further} is arbitrary.

We can find the optimal matching relations of all points from $Q^{++}$ and $C^{++}$
in the process of calculating $D_{dtw}(Q^{++},C^{++})$.
To simplify the analysis,
we divide $Q^{++}$ and $C^{++}$ into three contiguous segments,
as illustrated in Fig.~\ref{fig:extended proof}(b).

\begin{figure}[!htbp]
\centering
\subfloat[The first extended sequences $Q^+$, $C^+$]{\includegraphics[width= 6 cm, height = 2.5 cm ]{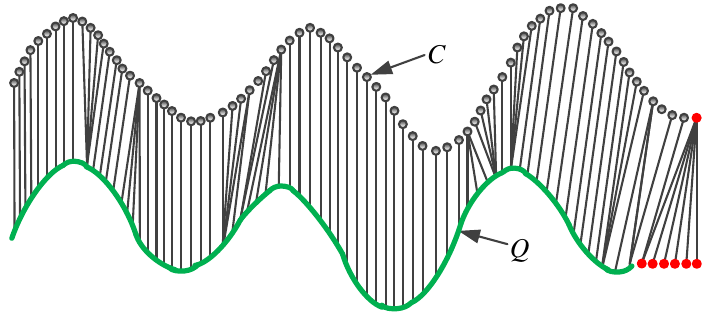}}\\
\subfloat[The second extended sequences $Q^{++}$, $C^{++}$]{\includegraphics[width= 7 cm, height = 3.3 cm ]{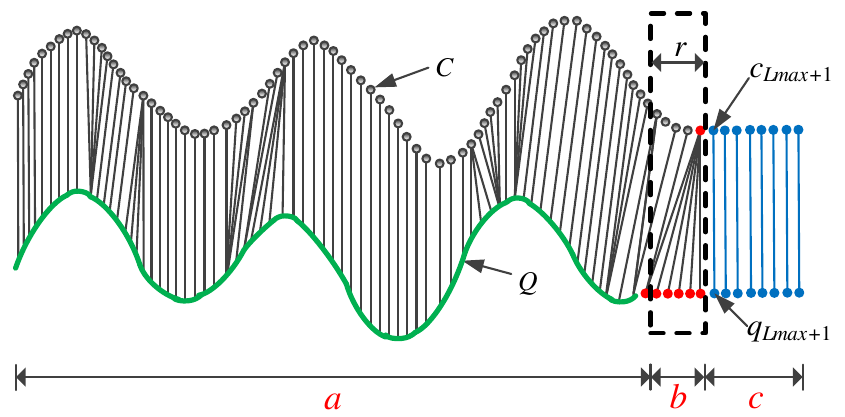}}\\
\caption{The calculation of DTW distance for extended sequences.}
\label{fig:extended proof}
\end{figure}

\textbf{Segment a}.
This segment is formed by $Q^{++}[1:Lmax-r]$, $C^{++}[1:Lmax-r]$.
When calculating $D_{dtw}(Q^{++},C^{++})$,
due to the constraint $r$ of Sakoe-Chiba band,
all the points from $Q^{++}$ and $C^{++}$ in this segment will not be affected by the extended sequences
$(q_{Lmax+1}, \cdots,q_{Lmax+t})$, $(c_{Lmax+1}, \cdots,c_{Lmax+t})$.
For all the points of $Q^{++}$ and $C^{++}$ in this segment,
their optimal matching relations are the same as those of $Q^+$ and $C^+$.

\textbf{Segment b}.
This segment is formed by $Q^{++}[Lmax-r+1:Lmax]$, $C^{++}[Lmax-r+1:Lmax]$.
The points from $Q^{++}$ and $C^{++}$ in this segment have the possibility to be affected by the extended sequences
$(q_{Lmax+1}, \cdots,q_{Lmax+t})$, $(c_{Lmax+1}, \cdots,c_{Lmax+t})$.
In the process of calculating DTW, any point on a sequence can replicate themselves.
Because $q_{Lmax}=q_{Lmax+1}=\cdots=q_{Lmax+t}$ and $c_{Lmax}=c_{Lmax+1}=\cdots=c_{Lmax+t}$.
The extended sequences $(q_{Lmax+1}, \cdots,q_{Lmax+t})$, $(c_{Lmax+1}, \cdots,c_{Lmax+t})$ can be understood as
self-replications of $q_{Lmax}$ and $c_{Lmax}$.
For all the points of $Q^{++}$ and $C^{++}$ in this segment,
the sum of the base distances of these points is equal to the counterpart of $Q^+$ and $C^+$.

\textbf{Segment c}.
This segment is formed by $Q^{++}[Lmax+1:Lmax+t]$, $C^{++}[Lmax+1:Lmax+t]$.
Because $q_{Lmax+1}=\cdots=q_{Lmax+t}=c_{Lmax+1}=\cdots=c_{Lmax+t}$.
For all the points of $Q^{++}$ and $C^{++}$ in this segment,
the sum of the base distance of these points is equal to 0.

Because $D_{dtw}(Q^{++},C^{++})$ is calculated by the sum of the base distance between all those matching points,
corresponding to the optimal warping path.
To sum up the above analysis, we can infer Eq.~\eqref{eq:path extension further theorem 1} holds.

We assume the upper and lower boundary sequences of $Q^+$ are denoted as:
\begin{equation}\label{eq:path extension further proof 1}
\begin{split}
   & U{^+} = ( u_1, u_2,\ldots, u_{Lmax} ) \\
   & L{^+} = ( l_1, l_2,\ldots, l_{Lmax} )
\end{split}
\end{equation}
where $u_i$, $l_i$ are defined in Eq.~\eqref{eq:envelope}.
Then we have
\begin{equation}\label{eq:path extension further proof 2}
\begin{split}
& LB\_Keogh(Q^+,C^+)= \sum\limits_{i=1}^{Lmax}  \left\{ {\begin{array}{l}
 |c_i-u_i|,  \ if \  c_i>u_i \\
 |c_i-l_i|,  \ if \  c_i<l_i \\
  0,  \quad otherwise \\
 \end{array}} \right.  \\
\end{split}
\end{equation}

Similarly, the upper and lower boundary sequences of $Q^{++}$ are denoted as $U{^{++}}$, $L{^{++}}$.
According to the definition of $Q^{++}$ in Eq.~\eqref{eq:sequence extension further},
we can derive that the first $Lmax$ elements of $U{^{++}}$, $L{^{++}}$ are just $U{^+}$ and $L{^+}$:
\begin{equation}\label{eq:path extension further proof 3}
\begin{split}
   & U{^{++}} = (u_1,\ldots, u_{Lmax}, u_{Lmax+1}, \ldots, u_{Lmax+t} ) \\
   & L{^{++}} = (l_1,\ldots, l_{Lmax}, l_{Lmax+1}, \ldots, l_{Lmax+t} )
\end{split}
\end{equation}

According to the definition of LB\_Keogh, we have
\begin{equation}\label{eq:path extension further proof 4}
\begin{split}
    & LB\_Keogh(Q^{++},C^{++})= \sum\limits_{i=1}^{Lmax+t}  \left\{ {\begin{array}{l}
 |c_i-u_i|,  \ if \  c_i>u_i \\
 |c_i-l_i|,  \ if \  c_i<l_i \\
  0,  \quad otherwise \\
 \end{array}} \right.  \\
\end{split}
\end{equation}

We substitute Eq.~\eqref{eq:path extension further proof 2} into Eq.~\eqref{eq:path extension further proof 4}:
\begin{equation}\label{eq:path extension further proof 5}
\begin{split}
    & LB\_Keogh(Q^{++},C^{++})= LB\_Keogh(Q^{+},C^{+}) + \\
    & \quad\quad\quad\quad\quad\quad \sum\limits_{i=Lmax+1}^{Lmax+t}  \left\{ {\begin{array}{l}
 |c_i-u_i|,  \ if \  c_i>u_i \\
 |c_i-l_i|,  \ if \  c_i<l_i \\
  0,  \quad otherwise \\
 \end{array}} \right.  \\
\end{split}
\end{equation}

Because $q_{Lmax+1}=\cdots=q_{Lmax+t}=c_{Lmax+1}=\cdots=c_{Lmax+t}$,
according to the definition of upper and lower boundary sequences in Eq.~\eqref{eq:envelope},
we have
\begin{equation}\label{eq:path extension further proof 6}
\begin{split}
    & \sum\limits_{i=Lmax+1}^{Lmax+t}  \left\{ {\begin{array}{l}
      |c_i-u_i|,  \ if \  c_i>u_i \\
      |c_i-l_i|,  \ if \  c_i<l_i \\
      0,  \quad otherwise \\
 \end{array}}  = 0 \right.  \\
\end{split}
\end{equation}


Therefore, we can deduce that Eq.~\eqref{eq:path extension further theorem 2} holds.

Similarly, we can prove that Theorem.~\ref{theorem:6} still holds
when the constraint $r$ of Sakoe-Chiba band is removed.
\end{proof}

From Theorem~\ref{theorem:6},
we can get an important property of the proposed method of sequence extension.
If the original sequences $Q$, $C$ are extended to $Q^+$, $C^+$ by Eq.~\eqref{eq:sequence extension},
the extended length is independent of $D_{dtw}(Q^+,C^+)$ and LB\_Keogh$(Q^+,C^+)$.

\section{Similarity search under dynamic time warping}\label{sec:similarity search}

In order to improve the efficiency of similarity search, it is necessary to resort to indexes.
If time series are directly organized by indexes,
the performance of similarity search will seriously degrade.
In this section, we reduce the dimension of time series,
and organize them with spatial indexes.
Then, we present the procedure of $\epsilon$-range search.

\subsection{Index of time series}\label{sec:index design}

In a dataset, any candidate sequence $C=(c_1, c_2, \cdots, c_m)$ can be represented by a $N$-dimensional vector $\bar{C} = (\bar{c}_1, \bar{c}_2,\ldots,\bar{c}_N)$,
where the $i$-th element $\bar{c}_i$ is defined as:
\begin{equation}\label{eq:PAA}
 \bar{c}_i= \frac{N}{n} \sum\limits_{k=\frac{n}{N}(i-1)+1}^{\frac{n}{N}i} c_k
\end{equation}

The transition from $C$ to $\bar{C}$ is called piecewise aggregate approximation (PAA).
In practical applications, $n$ cannot be exactly an integer multiple of $N$.
We can simply solve the problem by using our sequence extension method.
Detailed analysis will be carried out in the section of experiments.

Given two sequences $Q$, $C$,
we transform them into equal-length sequences
$Q^+=(q_1, q_2, \cdots, q_n, q_{n+1},\cdots,q_{Lmax})$, $C^+=(c_1, c_2, \cdots, c_m, c_{m+1}, \cdots,c_{Lmax})$.
Then we have
\begin{equation}\label{eq:extented paa}
\begin{split}
   & \bar{Q^+}=(\bar{q^+_1}, \bar{q^+_2},\ldots,\bar{q^+_N}) \\
   & \bar{C^+}=(\bar{c^+_1}, \bar{c^+_2},\ldots,\bar{c^+_N})
\end{split}
\end{equation}

The distance between $\bar{Q^+}$ and $\bar{C^+}$ is defined as:
\begin{equation}\label{eq:PAA distance}
 D_{PAA}(\bar{Q^+},\bar{C^+})=
 \frac{Lmax}{N} \sum\limits_{i=1}^{N} \left| \bar{q^+_i}-\bar{c^+_i} \right|
\end{equation}

We have the following inequality. Please refer to~\cite{Keogh2001Dimensionality},~\cite{Yi2000Fast} for detailed proof.
\begin{equation}\label{eq:PAA distance proof}
 D_{PAA}(\bar{Q^+},\bar{C^+}) \leq
  \sum\limits_{i=1}^{Lmax} \left| q^+_i-c^+_i \right|
\end{equation}

For any candidate sequence $C$ in a dataset,
we extend it to the sequence $C^+$ of length $Lmax$ by our proposed method.
Then, $C^+$ is transformed into $N$-dimensional vector $\bar{C^+}$ by PAA.
Thus, we can make use of spatial indexes to organize these $N$-dimensional vectors.

Without loss of generality,
we use R-Tree to organize the PAA form of candidate sequences.
Supposing $V$ is a leaf node of R-Tree,
the MBR (Minimum Bounding Rectangle) related to the leaf node $V$ is denoted as $R=(B,H)$,
where $B=(b_1,b_2,\ldots,b_N)$, $H=(h_1,h_2,\ldots,h_N)$
are the lower and upper boundaries of MBR.
Any candidate sequence,
the PAA form of which is contained in the MBR,
will be included in the leaf node $V$.

\subsection{The procedure of similarity search}\label{sec:similarity search procedure}

We extract the feature of a query sequence and take it as the input of $\epsilon$-range search.
For a query sequence $Q$,
we extend it to the sequence $Q^+$ of length $Lmax$ by our proposed method.
Next, we obtain the upper and lower boundary sequences of $Q^+$,
denoted as $U^+$ and $L^+$.
Then, $U^+$, $L^+$ are transformed into $N$-dimensional vectors
$\bar{U^+}=(\bar{u^+_1}, \bar{u^+_2},\ldots,\bar{u^+_N})$, $\bar{L^+}=(\bar{l^+_1}, \bar{l^+_2},\ldots,\bar{l^+_N})$ by PAA.

In order to complete similarity search,
we need to introduce another two lower bounding distances: LB\_PAA and LB\_MBR.

LB\_PAA between $Q^+$ and $C^+$ is defined as:
\begin{equation}\label{eq:LBPAA}
\begin{split}
	LB\_{PAA}(\bar{U^+},\bar{L^+},\bar{C^+})=
	\frac{n}{N} \sum\limits_{i=1}^{N}  \left\{ {\begin{array}{l}
    |\bar{c^+_i} - \bar{u^+_i}|, \ if \  \bar{c^+_i}>\bar{u^+_i}\\
    |\bar{c^+_i}-\bar{l^+_i}|,   \ if \  \bar{c^+_i}<\bar{l^+_i} \\
    0,  \quad \quad \quad \quad otherwise \\
 \end{array}} \right.
\end{split}
\end{equation}

According to  Eq.~\eqref{eq:PAA distance proof}, we have
\begin{equation}\label{eq:lower boundary of Keogh PAA}
 LB\_{PAA}(\bar{U^+},\bar{L^+},\bar{C^+}) \leq LB\_Keogh(Q^+,C^+)
\end{equation}

LB\_MBR between $Q^+$ and MBR $R$ is defined as:
\begin{equation}\label{eq:LBPAAextend}
\begin{split}
	LB\_MBR(\bar{U^+},\bar{L^+},R)=
	\frac{n}{N} \sum\limits_{i=1}^{N}  \left\{ {\begin{array}{l}
    |\bar{u^+_i} - h_i|, \ if \  \bar{u^+_i}>h_i\\
    |b_i-\bar{l^+_i}|,   \ if \  \bar{l^+_i}<b_i \\
    0,  \quad \quad \quad \quad otherwise \\
 \end{array}} \right.
\end{split}
\end{equation}

Given the extended query sequence $Q^+$ and MBR $R$,
for any extended candidate sequence $C^+$ contained in $R$,
we have the following inequality:
\begin{equation}\label{eq:DistMBR}
\begin{split}
	LB\_MBR(\bar{U^+},\bar{L^+},R) \leq  LB\_{PAA}(\bar{U^+},\bar{L^+},\bar{C^+})
\end{split}
\end{equation}

For detailed proof, please refer to~\cite{keogh2005exact}.
The procedure of $\epsilon$-range search is summarized in Algorithm~\ref{algorithm:1}.
\begin{algorithm}[h]
\caption{ RangeSearch($Q,P,\epsilon$)}\label{algorithm:1}
\begin{algorithmic}[1]
\REQUIRE A query sequence $Q$, time series dataset $C_i^+$ ($1 \leq i \leq k$),
the root node $P$ of R-tree, distance threshold $\epsilon$.
\ENSURE Result set \textbf{R} of $\epsilon$-range search. \\
\STATE Initialize $Q^+$, $U^+$, $L^+$, $\bar{U^+}$, $\bar{L^+}$; \\
\IF { $P$ is a non-leaf node}
\FOR {each child node $T$ of $P$}
\IF {LB\_MBR$(\bar{U^+},\bar{L^+},R) \leq \epsilon$}
\STATE //$R$ is the MBR corresponding to node $T$.
\STATE RangeSearch($Q,T,\epsilon$);
\ENDIF
\ENDFOR
\ELSE
\FOR {each PAA point $\bar{C^+_i}$ in $P$}
\IF {LB\_{PAA}$(\bar{U^+},\bar{L^+},\bar{C^+_i}) \leq \epsilon$}
\STATE retrieve original sequence $C_i$ from the dataset;
\ENDIF
\IF {$D_{dtw}(Q,C_i) \leq \epsilon$}
\STATE add $C_i$ to \textbf{R};
\ENDIF
\ENDFOR
\ENDIF
\end{algorithmic}
\end{algorithm}

\subsection{Analysis of effectiveness and complexity}

So far, we have introduced many kinds of lower bounding distances.
LB\_PAA and LB\_MBR are used for similarity search on spatial indexes,
and LB\_Keogh is the basis for the two methods.
Fig.~\ref{fig:noFalseDismissal} illustrates the relationship between different distances.
For two original sequences $Q$, $C$,
if $D_{dtw}(Q,C) \leq \epsilon$,
we can get $D_{dtw}(Q^+,C^+)$ and these lower bounding distances
are all less than or equal to $\epsilon$.
Therefore, our proposed similarity search can guarantee no false dismissals.

\begin{figure}[htbp]
  \centering
  \includegraphics[width=0.47\textwidth]{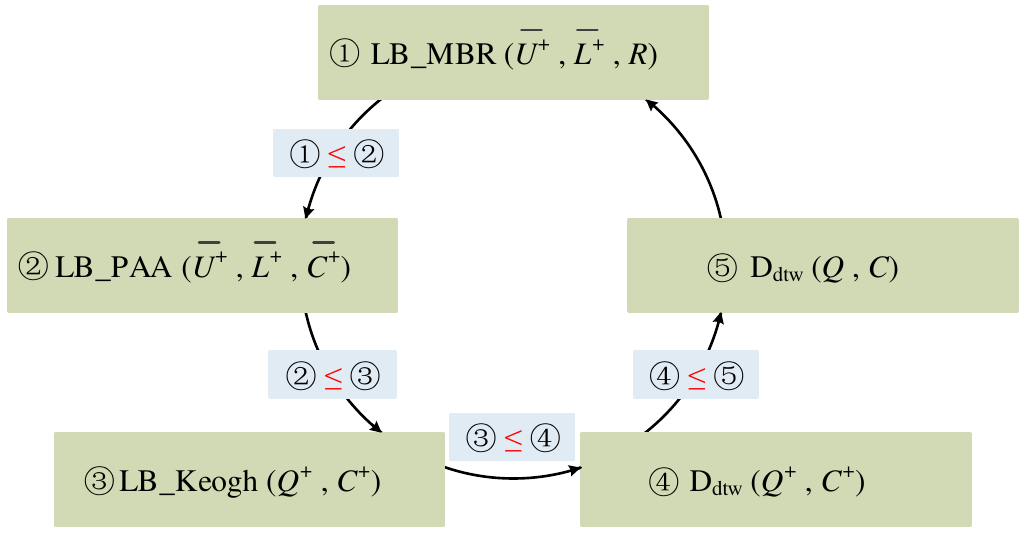}
  \caption{The relationship between different distances.}
  \label{fig:noFalseDismissal}
\end{figure}

The complexity of DTW distance between $Q$ and $C$ is $O(nm)$,
where $n$, $m$ are the lengths of $Q$ and $C$ respectively.
In our proposed method,
the computational cost of sequence extension is very low and even can be ignored.
The complexity of LB\_Keogh, LB\_{PAA} and LB\_MBR is at most linear in the length of sequences.
Compared with sequential scan under DTW distance,
our similarity search can effectively improve the retrieval efficiency.

\section{Experimental evaluation}\label{sec:experiments}

Our experiments are carried out on a PC with Intel Core i7-8550U CPU and 16 GB RAM,
running with Matlab R2018a.
The proposed method is evaluated on 10 benchmark datasets coming from the UCR time series repository~\cite{UCRDataSet},
where the lengths of the sequences on a dataset are all the same.
To obtain sequences of different lengths, we truncate a random length at the end of every sequence,
such that any pair of the sequences on a dataset satisfies Theorem~\ref{theorem:2}.

These datasets cover a wide range of applications, such as energy, medicine, image matching, motion recognition, \emph{etc}.
The average lengths of the sequences in these datasets vary from 23 to 683.
More information of the datasets is shown in Table~\ref{tab:datasets}.

In our experiments,
the constraint $r$ of Sakoe-Chiba band is equal to 10\% of the length of the longest original sequence on a dataset.
Because this value appears to be the most commonly used in many literatures.

\subsection{The evaluation of validity for sequence extension}

Due to space limitation,
we just choose GunPoint dataset to evaluate the validity of sequence extension.
It involves one female actor and one male actor making a motion with their hand,
and contains two classes: Gun-Draw and Point,
as illustrated in Fig.~\ref{fig:GunPoint}.

\begin{table}[H]
\centering
\caption{The details of the benchmark datasets.}\label{tab:datasets}
\begin{tabular}{p{0.3cm}<{\centering}p{2cm}p{1.2cm}<{\centering}p{1cm}<{\centering}p{0.8cm}<{\centering}p{0.9cm}<{\centering}}
\hline
\textbf{ID} & \textbf{Dataset} & \textbf{Length} & \textbf{Av.length} &	\textbf{Class} &	\textbf{Sample} \\
\hline
1  & ItalyPowerDemand  &22--24 &23 &2 &1096 \\
2  & SyntheticControl  &54--60 &57 &6 &600 \\
3  & ECG5000  &126--140 &133 &5 &5000 \\
4  & GunPoint  &135--150 &143 &2 &200 \\
5  & WordSynonyms  &243--270 &256 &25 &905 \\
6  & Words50  &243--270 &257 &50 &905 \\
7 & Symbols  &359--398 &379 &6  &1020 \\
8 & Yoga  &384--426 &405 &2 &3300 \\
9 & ShapesAll  &461--512 &486 &60 &1200 \\
10 & Computers  &648--720 &683 &2 &500 \\
\hline
\end{tabular}
\end{table}

For Gun-Draw the actors have their hands by their sides.
They draw a replicate gun from a hip-mounted holster, point it at a target for approximately one second,
then return the gun to the holster, and their hands to their sides.
For Point the actors have their gun by their sides.
They point with their index fingers to a target for approximately one second, and then return their hands to their sides.
For both classes, we record the centroid of the actor's right hands in X-axis, which appear to be highly correlated.

\begin{figure}[htbp]
  \centering
  \includegraphics[width=0.36\textwidth]{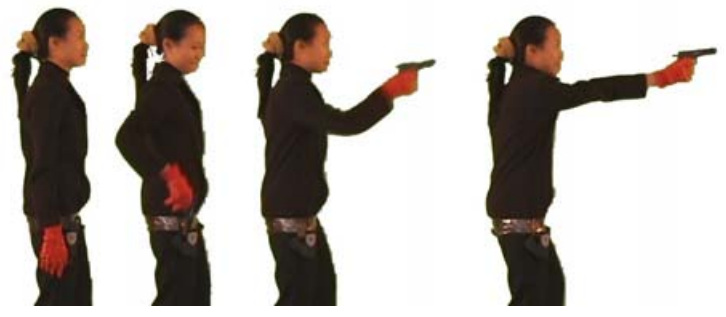}
  \caption{Description of GunPoint dataset.}
  \label{fig:GunPoint}
\end{figure}

We extend all the sequences to the same minimum length $Lmax=151$, according to Eq.~\eqref{eq:sequence extension}.
Then, we randomly choose one as the query sequence $Q^+$,
and calculate DTW distance between $Q^+$ and every candidate sequence $C^+_i$.
Correspondingly, DTW distance between $Q$ and every original sequence $C_i$ is also calculated.
In Fig.~\ref{fig:GunPoint extension},
we can see $D_{dtw}(Q^+,C^+_i)$ is always less than or equal to $D_{dtw}(Q,C_i)$,
which just verifies Eq.~\eqref{eq:path extension theorem 1} in Theorem.~\ref{theorem:5}.
In fact, there is almost no difference between $D_{dtw}(Q,C_i)$ and $D_{dtw}(Q^+,C^+_i)$,
except for the two candidate sequences $C^+_{24}$, $C^+_{112}$.

\begin{figure}[htbp]
  \centering
  \includegraphics[width=0.49\textwidth]{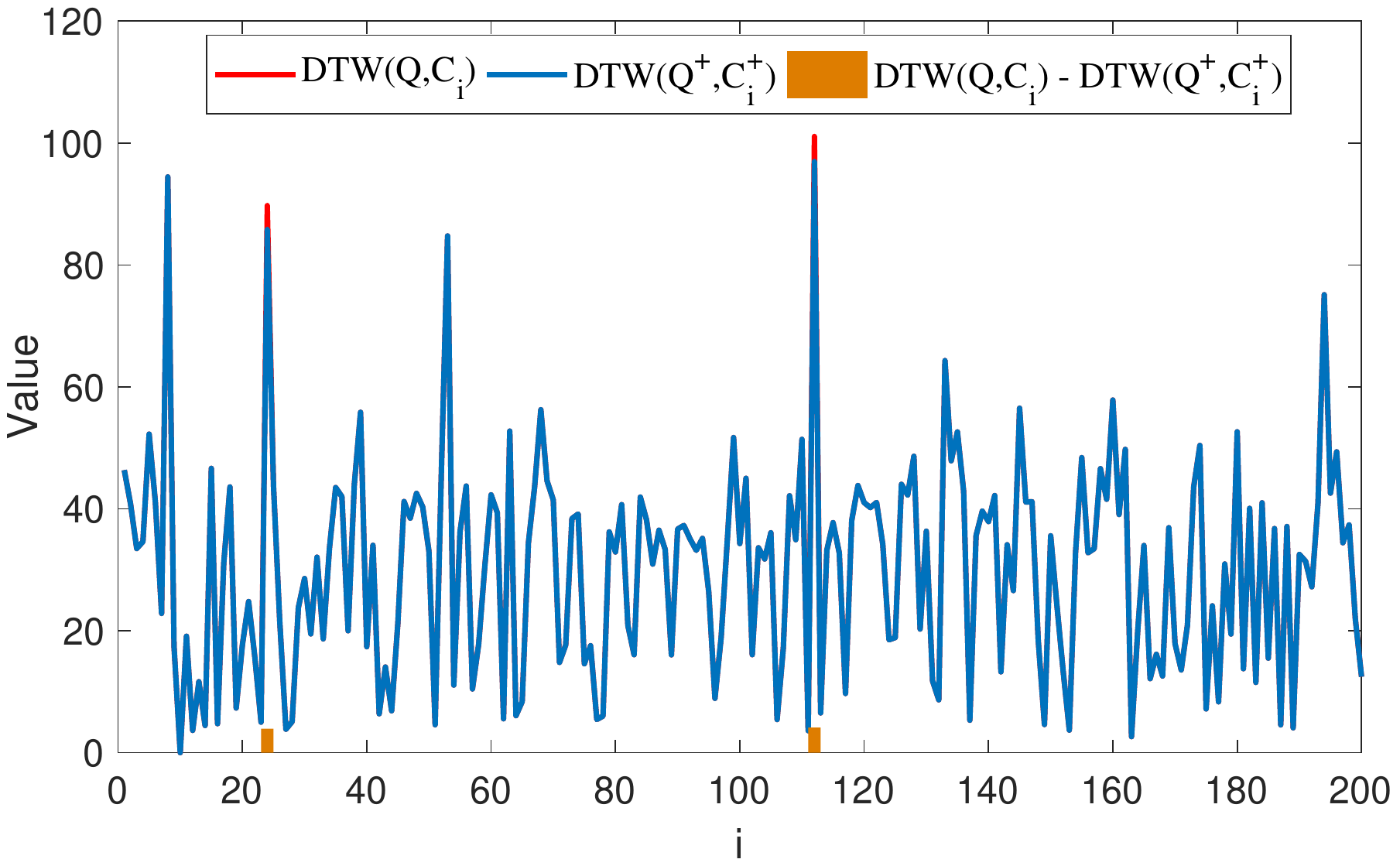}
  \caption{Results of sequence extension on GunPoint dataset.}
  \label{fig:GunPoint extension}
\end{figure}

We continue to increase $Lmax$ to observe the effect of extension length on DTW distance.
Fig.~\ref{fig:GunPoint further extension} shows the average values of $D_{dtw}(Q,C_i)$, $D_{dtw}(Q^+,C^+_i)$
between the query sequence and every candidate sequence.
With the increase of $Lmax$,
the average value of $D_{dtw}(Q^+,C^+_i)$ remains the same.
It indicates that extension length does not affect DTW distance,
which just verifies Eq.~\eqref{eq:path extension further theorem 1} in Theorem.~\ref{theorem:6}.

\begin{figure}[htbp]
  \centering
  \includegraphics[width=0.46\textwidth]{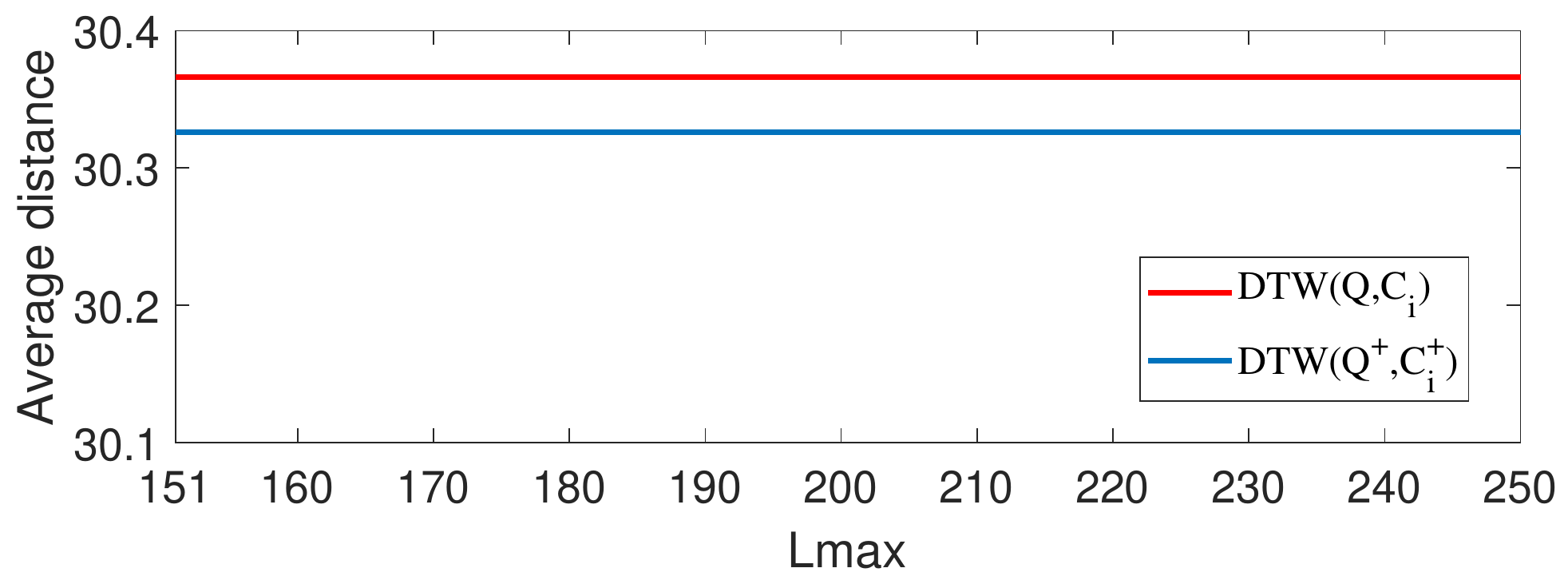}
  \caption{Results of increasing sequence length on GunPoint dataset.}
  \label{fig:GunPoint further extension}
\end{figure}

We repeat the experiments above on other datasets.
Fig.~\ref{fig:benchmark datasets further extension} illustrates the experimental results,
where the sequences in a dataset are all extended to the same minimum length.
On all the benchmark datasets,
the average values of $D_{dtw}(Q,C_i)$, $D_{dtw}(Q^+,C^+_i)$ are close to each other.
This means that sequence extension has little effect on DTW distance.

\begin{figure}[htbp]
  \centering
  \includegraphics[width=0.48\textwidth]{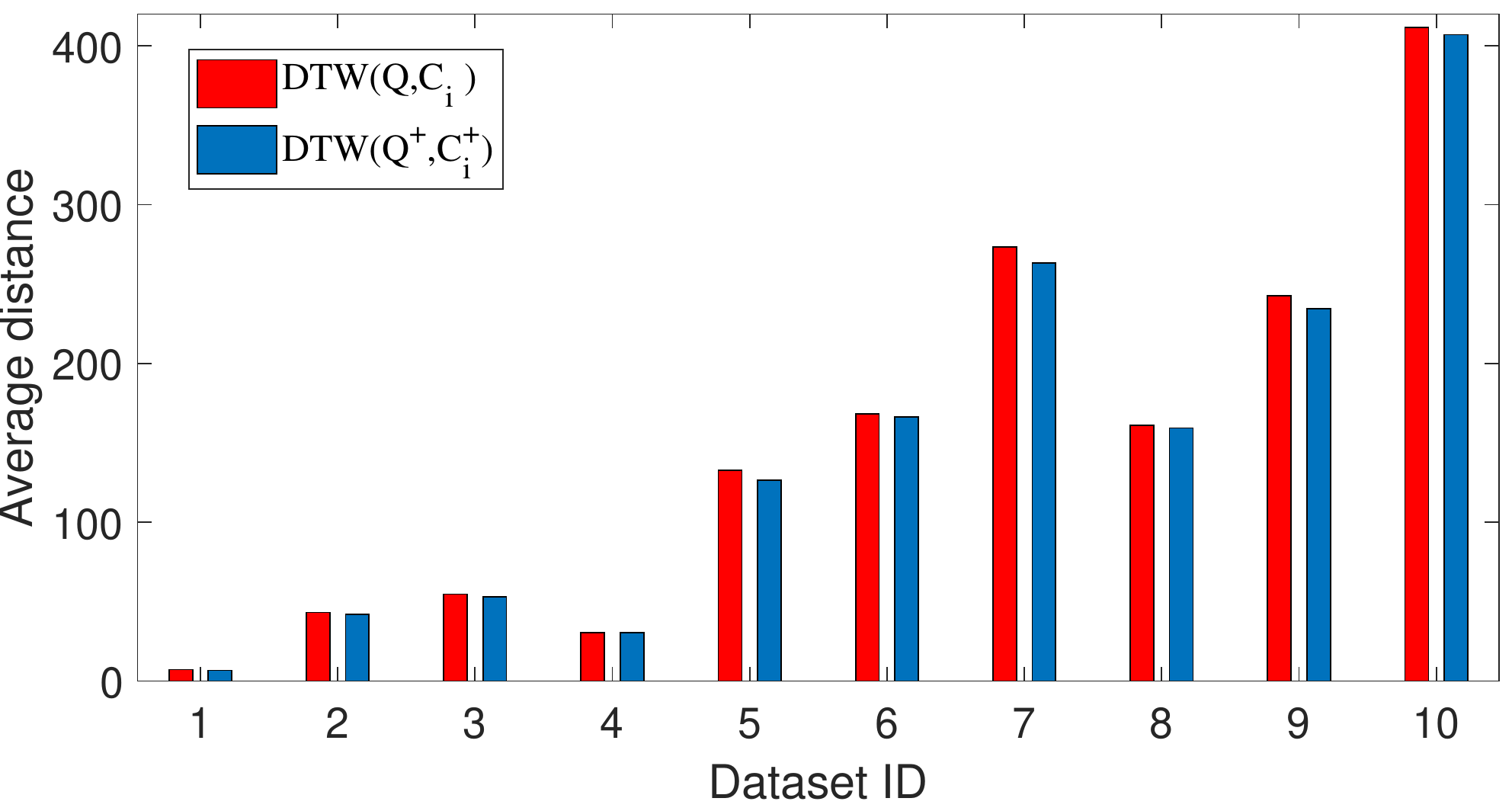}
  \caption{Results of sequence extension on 10 benchmark datasets.}
  \label{fig:benchmark datasets further extension}
\end{figure}

\subsection{Comparison of lower bounding distances}

Our proposed lower bounding distance is a seamless combination of sequence extension and LB\_Keogh.
Thus, we mark it as LB\_Keogh$^+$ in the paper.
We compare LB\_Keogh$^+$ with LB\_Kim and LB\_Yi,
which can be used for unequal-length sequences.

We first evaluate LB\_Keogh$^+$ with tightness $T\in$ [0,\,1],
which is defined as the ratio of the lower bounding distance to DTW distance.
The larger the tightness is, the better the lower bounding distance is.
The tightness of LB\_Keogh$^+$ is written as:
\begin{equation}\label{eq:tightness_proposed method}
\begin{split}
	T_{Keogh^+} = \frac{LB\_Keogh^+(Q,C_i)}{D_{dtw}(Q,C_i)}
\end{split}
\end{equation}

For an $\epsilon$-range search on a dataset,
we compute the average tightness between the query sequence $Q$ and each candidate sequence $C_i$.
Fig.~\ref{fig: tightness comparison on benchmark datasets} and Table~\ref{tab:tightness comparison on benchmark datasets}
illustrate the average tightness over 100 $\epsilon$-range searches on every dataset.

\begin{figure}[htbp]
  \centering
  \includegraphics[width=0.42\textwidth]{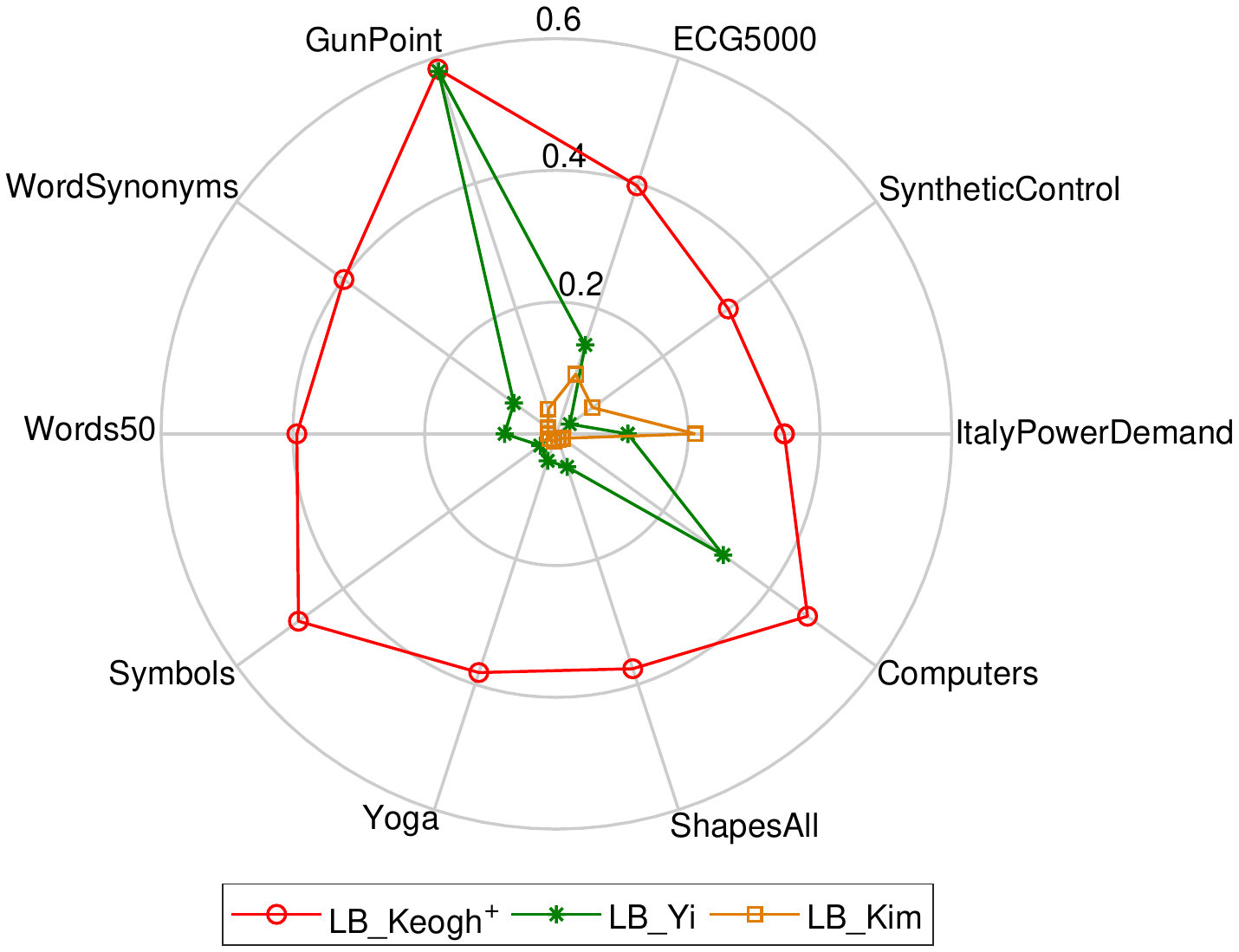}
  \caption{Tightness comparison on 10 benchmark datasets.}
  \label{fig: tightness comparison on benchmark datasets}
\end{figure}

We can see, the tightness of LB\_Keogh$^+$ is obviously greater than that of LB\_Kim and LB\_Yi on every dataset,
and LB\_Kim has the lowest tightness on most datasets except for ItalyPowerDemand.
The reason is that LB\_Keogh$^+$ makes more points of sequences participate in the calculation of lower bounding distance.
LB\_Kim only chooses four feature points to calculate the lower bounding distance.
When sequence length is short, such as ItalyPowerDemand, the tightness of LB\_Kim is acceptable.
With the increase of sequence length, its tightness decreases sharply.

\begin{table}[H]
\centering
\caption{Results of tightness comparison.}\label{tab:tightness comparison on benchmark datasets}
\begin{tabular}{p{2.3cm}p{1.5cm}<{\centering}p{1.2cm}<{\centering}p{1.2cm}<{\centering}}
\hline
\textbf{Dataset} & $\textbf{LB\_Keogh}^+$ &	$\textbf{LB\_Yi}$ &	$\textbf{LB\_Kim}$ \\
\hline
    ItalyPowerDemand & 0.3459 & 0.1083 & 0.2105 \\
    SyntheticControl & 0.3225 & 0.0251 & 0.0678 \\
    ECG5000 & 0.3959 & 0.1419 & 0.0952 \\
    GunPoint & 0.5823 & 0.5787 & 0.0396 \\
    WordSynonyms   & 0.3986 & 0.0798 & 0.0162 \\
    Words50 & 0.3939 & 0.0786 & 0.0122 \\
    Symbols & 0.4839 & 0.0302 & 0.0128 \\
    Yoga  & 0.3810 & 0.0431 & 0.0124 \\
    ShapesAll & 0.3749 & 0.0527 & 0.0092 \\
    Computers & 0.4711 & 0.3131 & 0.0115 \\
\hline
\end{tabular}
\end{table}

Pruning power $P\in$ [0,1] is another important indicator to evaluate lower bounding distances,
which is defined as:
\begin{equation}\label{eq:pruning power}
\begin{split}
	P = \frac{S_0}{S}
\end{split}
\end{equation}
where $S$ is the number of sequences calculated by DTW distance using sequential scan method,
$S_0$ is the number of sequences that do not require the calculation of DTW distance by using a kind of lower bounding distance.
The larger $P$ is, the better the filtering effect of a lower bounding distance is.

We calculate the average pruning power over 100 $\epsilon$-range searches on GunPoint, ECG5000, Yoga and Computers datasets.

ECG5000 dataset is a 20-hour long ECG downloaded from Physionet.
The data are pre-processed in two steps: extract each heartbeat; make each heartbeat equal length using interpolation.
After that, 5,000 heartbeats are randomly selected.
The patients have severe congestive heart failure and the class values are obtained by automated annotation.

Yoga dataset is obtained by capturing two actors transiting between yoga poses in front of a green screen,
as illustrated in Fig.~\ref{fig:Yoga}.
Each image was converted to a one dimensional series by finding the outline and measuring the distance of the outline to the centre.
The problem is to discriminate between one actor (male) and another (female).
\begin{figure}[htbp]
  \centering
  \includegraphics[width=0.49\textwidth]{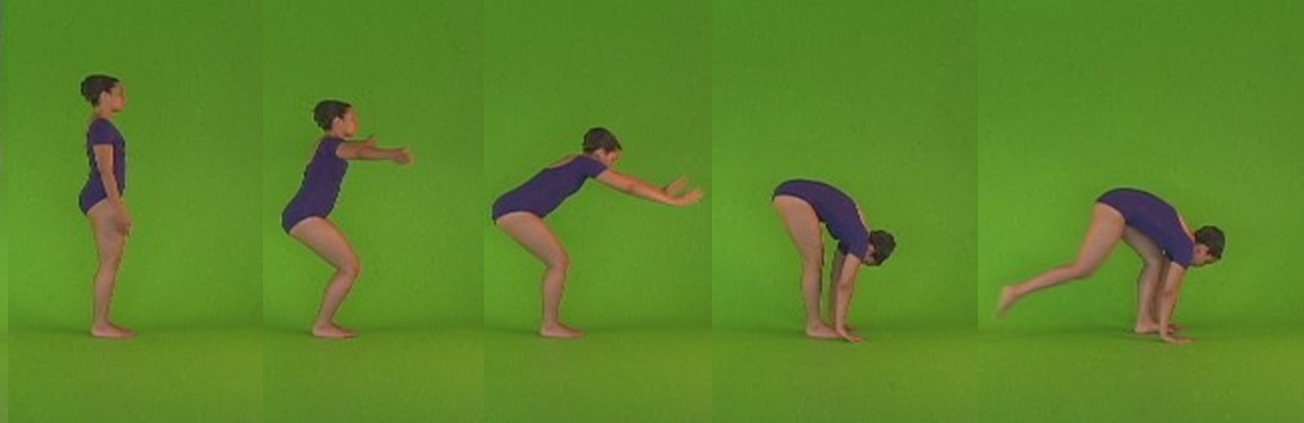}
  \caption{Description of Yoga dataset.}
  \label{fig:Yoga}
\end{figure}

Computers dataset is taken from data recorded as part of government sponsored study called Powering the Nation.
The intention is to collect behavioural data about how consumers use electricity within the home to help reduce the UK's carbon footprint.
The data contains readings from 251 households, sampled in two-minute intervals over a month.
Classes are Desktop and Laptop.

Experimental results are shown in Fig.~\ref{fig:comparison of pruning power},
where the height of a bar represents the ratio of the number of retrieved sequences to the total number of sequences.
We can see the pruning power of LB\_Keogh$^+$ outperforms that of LB\_Kim and LB\_Yi,
and LB\_Kim hardly works except when $\epsilon$ is small.
Because LB\_Kim uses $L_\infty$ instead of $L_1$ or $L_2$ as its base distance.
Unlike most of other methods, the DTW distance defined by LB\_Kim is not the sum of the base distance,
which makes it unfair to compare LB\_Kim with other methods.

\begin{figure*}[!htbp]
\centering
\subfloat[GunPoint]{\includegraphics[width= 7.2 cm, height = 4 cm ]{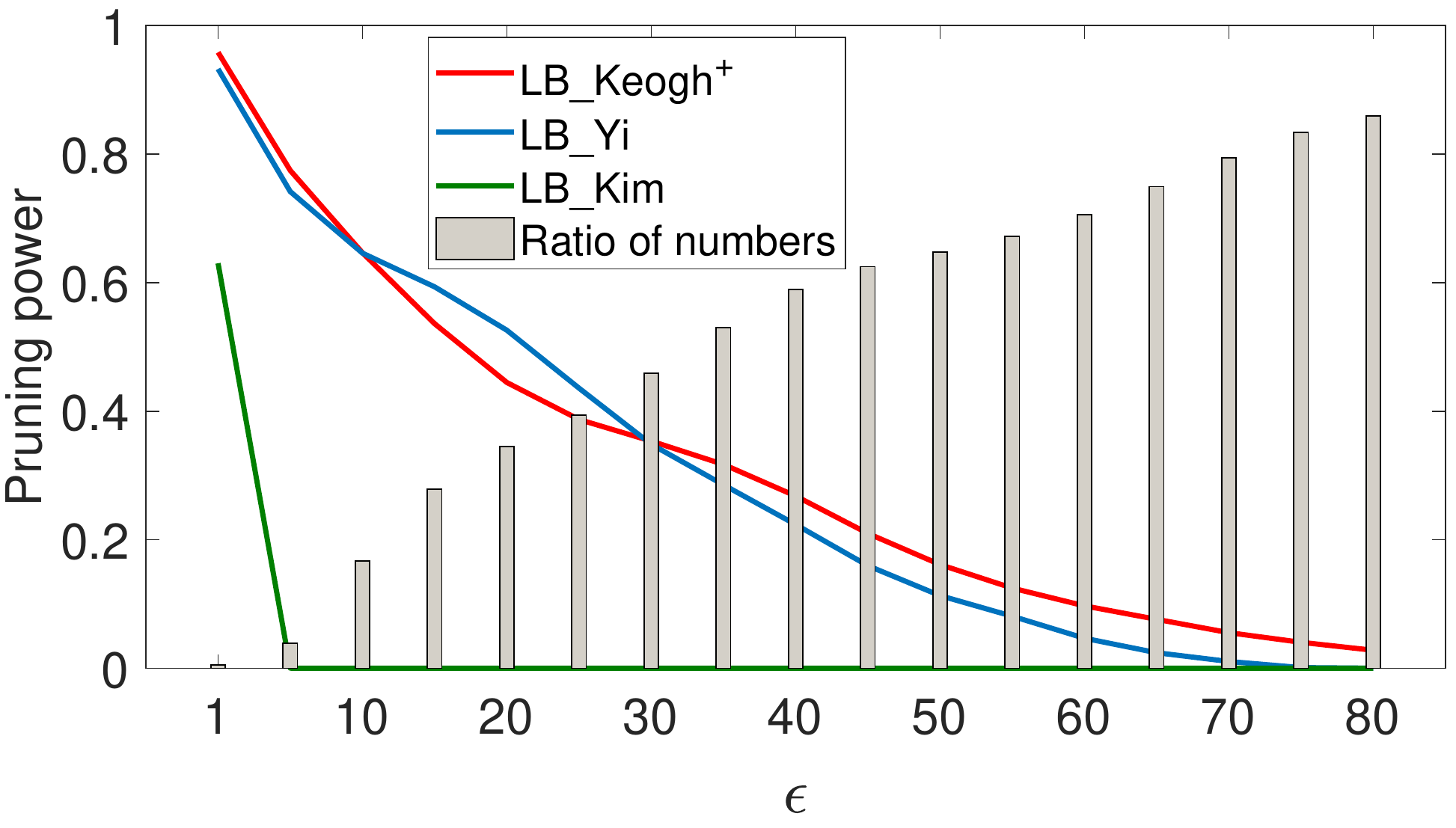}}
~~~~~~\subfloat[ECG5000]{\includegraphics[width= 7.2 cm, height = 4 cm ]{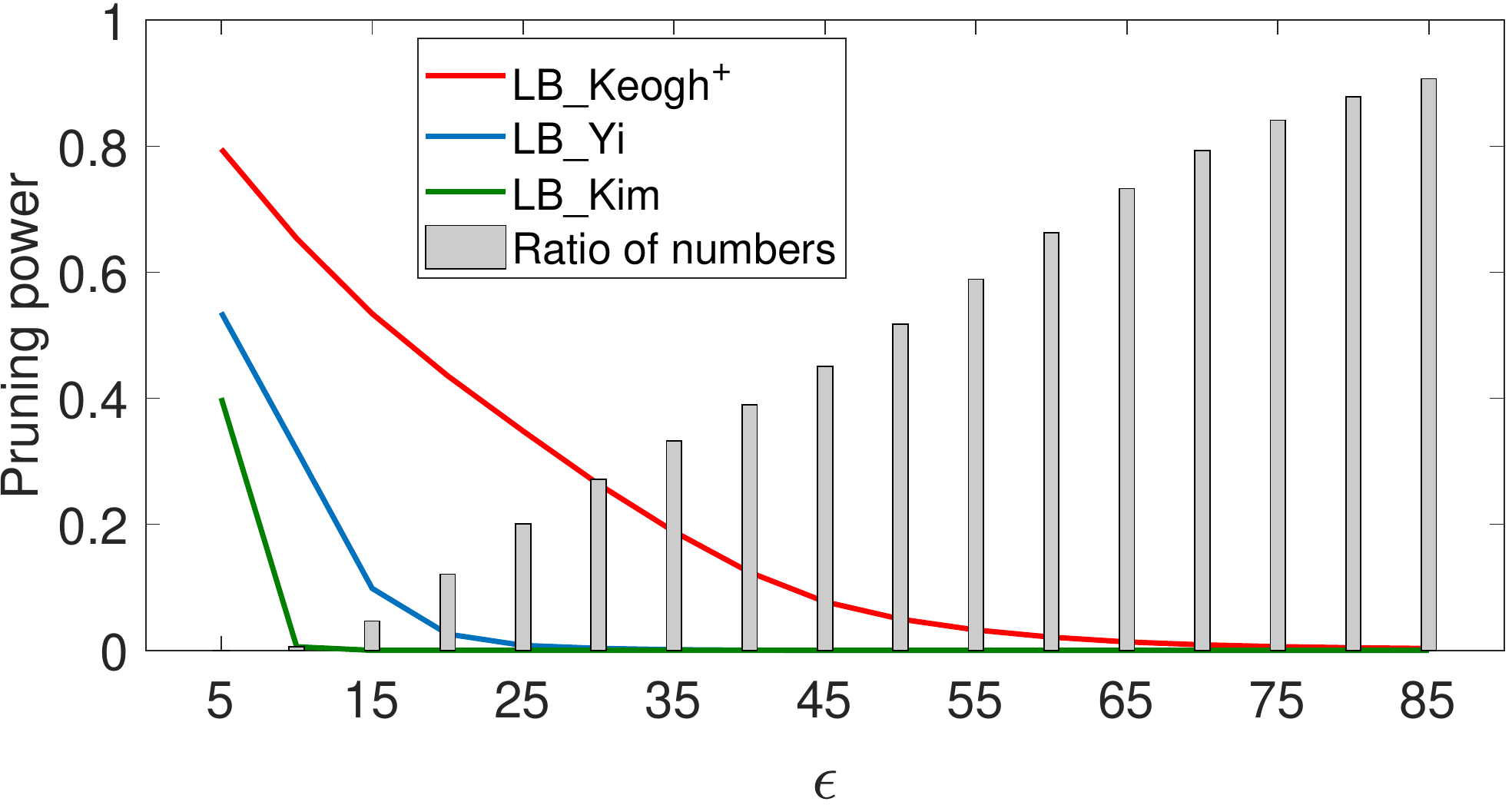}}\\
\subfloat[Yoga]{\includegraphics[width= 7.2 cm, height = 4 cm ]{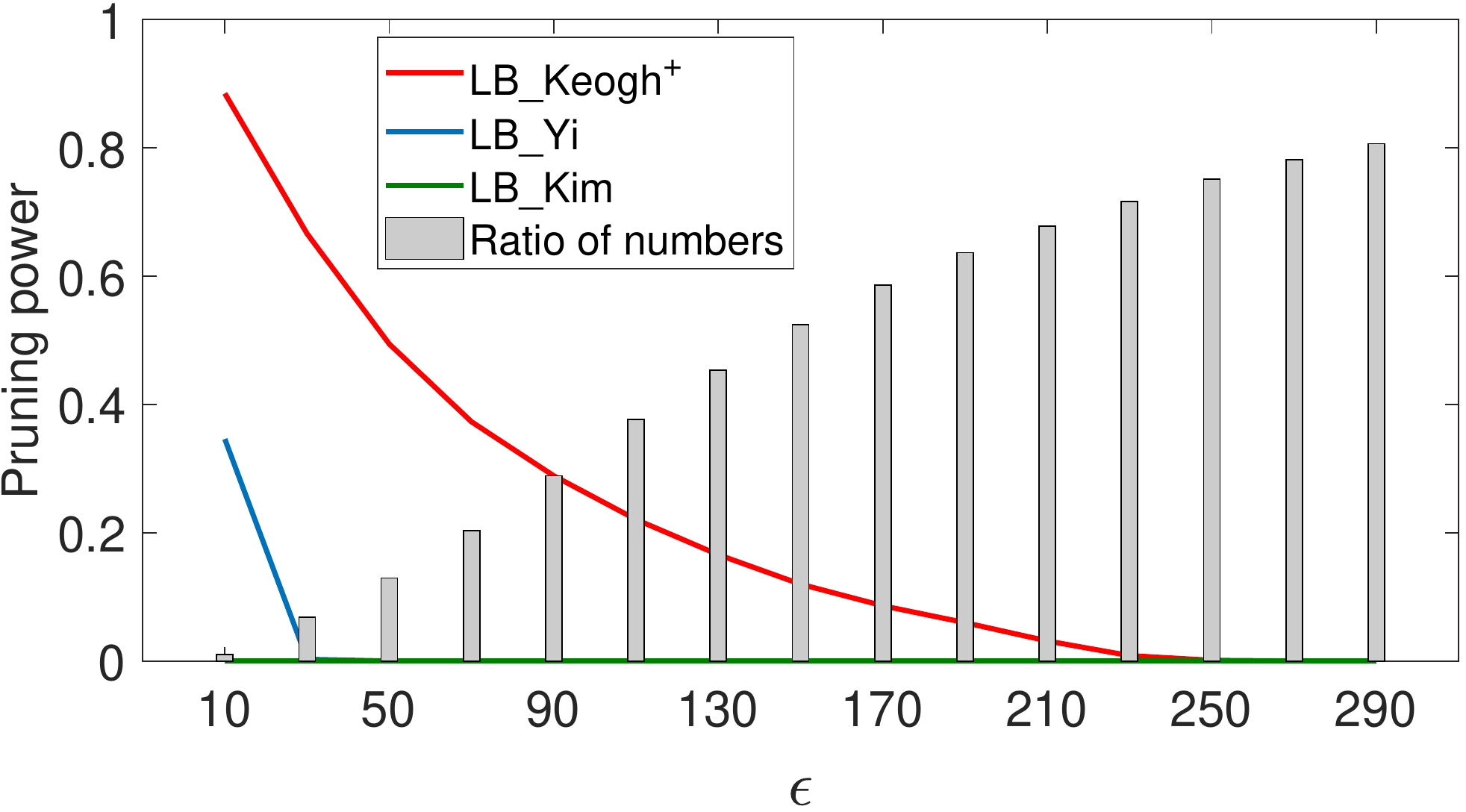}}
~~~~~~\subfloat[Computers]{\includegraphics[width= 7.2 cm, height = 4 cm ]{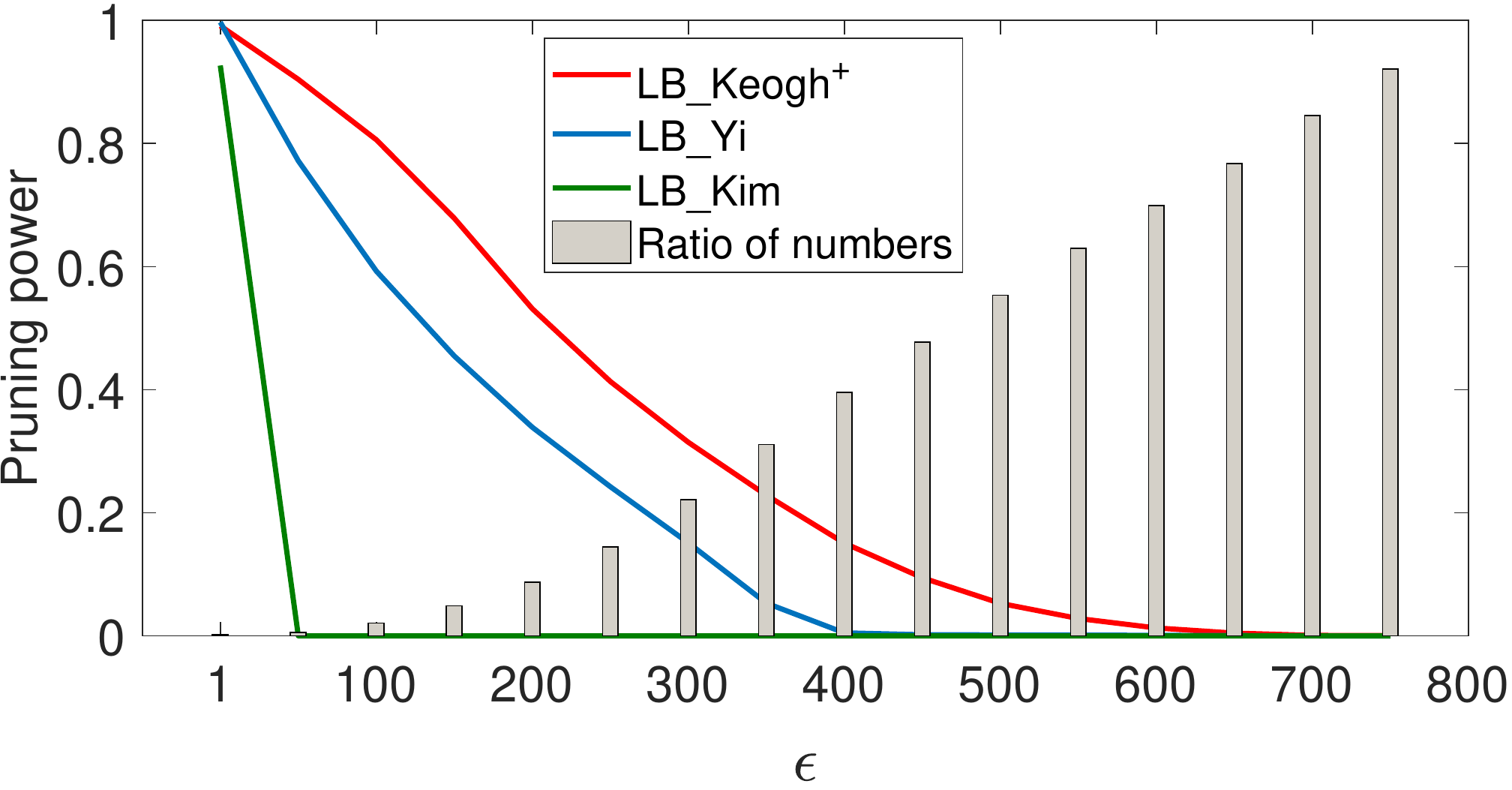}}
\caption{Comparison of pruning power on four benchmark datasets.}
\label{fig:comparison of pruning power}
\end{figure*}

When $\epsilon$ is small, the pruning power of LB\_Keogh$^+$ is acceptable.
With the increase of $\epsilon$, its pruning power also gradually decreases to zero.
The reason is that $\epsilon$ determines the number of sequences filtered out by LB\_Keogh$^+$.
Especially, when $\epsilon$ is large enough, LB\_Keogh$^+$ completely loses its filtering effect.
Generally speaking, $\epsilon$-range search only needs to find a small number of sequences from a dataset,
for example, 10\% of the total number of sequences.
That is, $\epsilon$ is usually not large in the sense.
Therefore, LB\_Keogh$^+$ has desirable pruning power in practical applications.

\subsection{Discussion on the number of subsegments in PAA}

In this subsection, we discuss the effect of changing the number of subsegments in PAA.
The tightness of LB\_{PAA} is defined as:
\begin{equation}\label{eq:tightness_paa}
\begin{split}
	T_{PAA} = \frac{LB\_{PAA}(\bar{U^+},\bar{L^+},\bar{C_i^+})}{D_{dtw}(Q,C_i)}
\end{split}
\end{equation}

We extend all the sequences of a dataset to the same length $Lmax$,
such that more integers are divisible by this length.
The results of sequence extension are shown in Table~\ref{tab:extended length}.
Then, we calculate the average tightness over 100 $\epsilon$-range searches,
as illustrated in Fig.~\ref{fig:subsegments in PAA}.

\begin{table}[H]
\centering
\caption{The extended length of the sequences.}\label{tab:extended length}
\begin{tabular}{p{2.1cm}p{1cm}<{\centering}|p{1.2cm}p{1cm}<{\centering}}
\hline
\textbf{Dataset} & \textbf{Lmax} & \textbf{Dataset}  &	\textbf{Lmax} \\
\hline
ItalyPowerDemand   &26 &Words50   &272 \\
SyntheticControl   &63 &Symbols   &399 \\
ECG5000   &144 &Yoga   &429 \\
GunPoint   &152 &ShapesAll   &513 \\
WordSynonyms   &272 &Computers  &726 \\
\hline
\end{tabular}
\end{table}

\begin{figure}[!htbp]
\centering
\subfloat[Average length $\leq$ 256]{\includegraphics[width= 4.5 cm, height = 3.8 cm ]{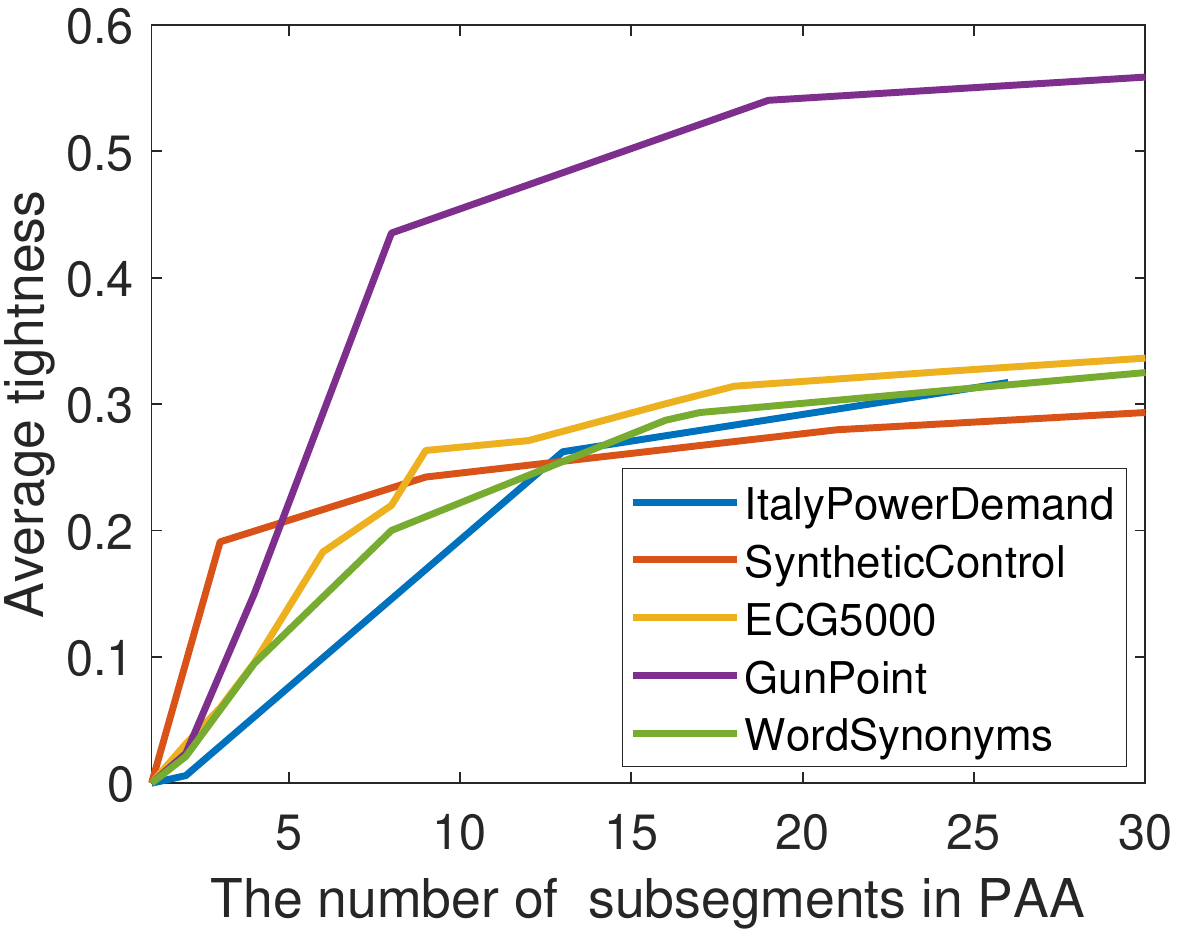}}
\subfloat[Average length $\geq$ 257]{\includegraphics[width= 4.5 cm, height = 3.8 cm ]{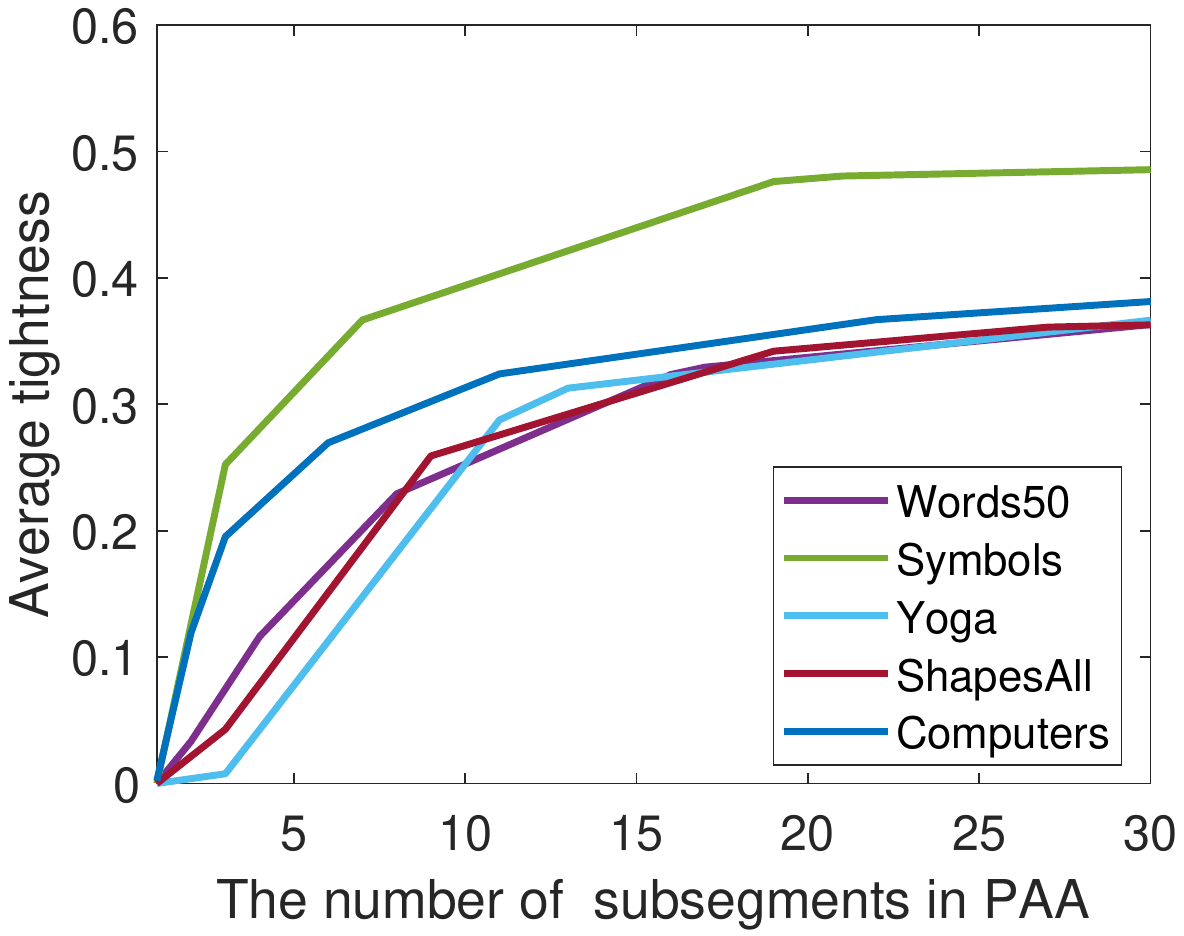}}
\caption{Experimental results on the number of subsegments in PAA.}
\label{fig:subsegments in PAA}
\end{figure}

With the increase of the number of subsegments,
the tightness of LB\_{PAA} goes up gradually.
Because the more subsegments there are,
the more approximate $\bar{U^+}$, $\bar{L^+}$, $\bar{C^+}$ are to $U^+$, $L^+$, $C^+$.
When the number reaches a turning point, such as 16,
the growth trend of tightness become flat.
That is, if the number of subsegments is chosen at the turning point,
we can dramatically reduce the dimension of time series
and make the tightness of LB\_{PAA} acceptable.

We further analyze the effect of the length of sequence extension on LB\_{PAA}.
According to Eq.~\eqref{eq:tightness_paa},
we calculate the average tightness of LB\_{PAA} over 100 $\epsilon$-range searches.
In Fig.~\ref{fig:Effect of increasing length on LBPAA},
we can see from the micro point of view,
when the number of subsegments is fixed,
the longer the sequences are extended, the smaller the tightness will be.
The reason is that if the number of subsegments is fixed,
the approximation of PAA to sequences becomes worse with the increase of the length of sequences.
From the macro point of view,
the variation of extension length within a certain range has little effect on the tightness of LB\_{PAA}.
Therefore, we can extend sequences to an arbitrary suitable length,
such that the length is just an integer multiple of the desired number of subsegments.

\begin{figure}[!htbp]
\centering
\subfloat[ECG5000]{\includegraphics[width= 4.5 cm, height = 3.4 cm ]{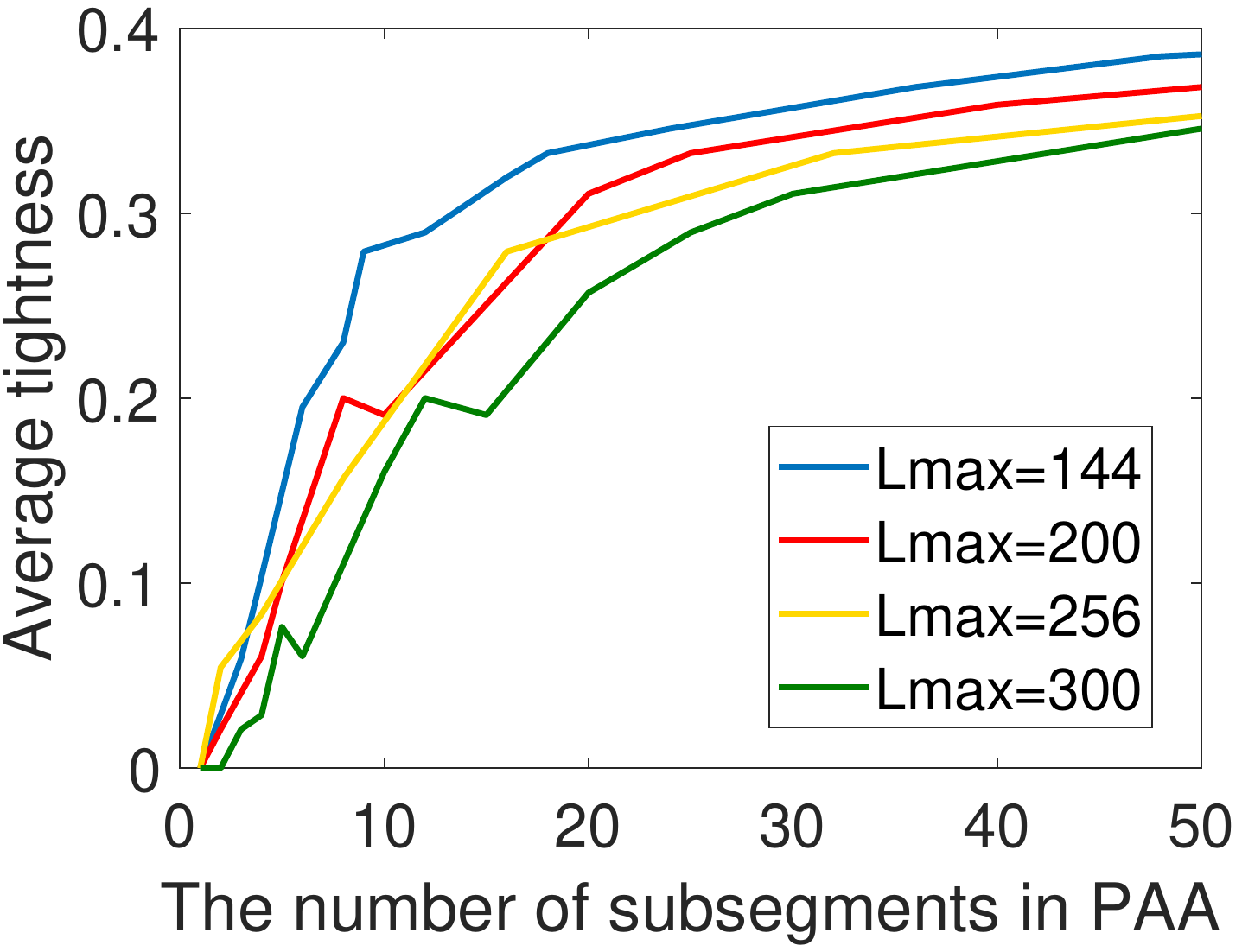}}
~\subfloat[GunPoint]{\includegraphics[width= 4.5 cm, height = 3.4 cm ]{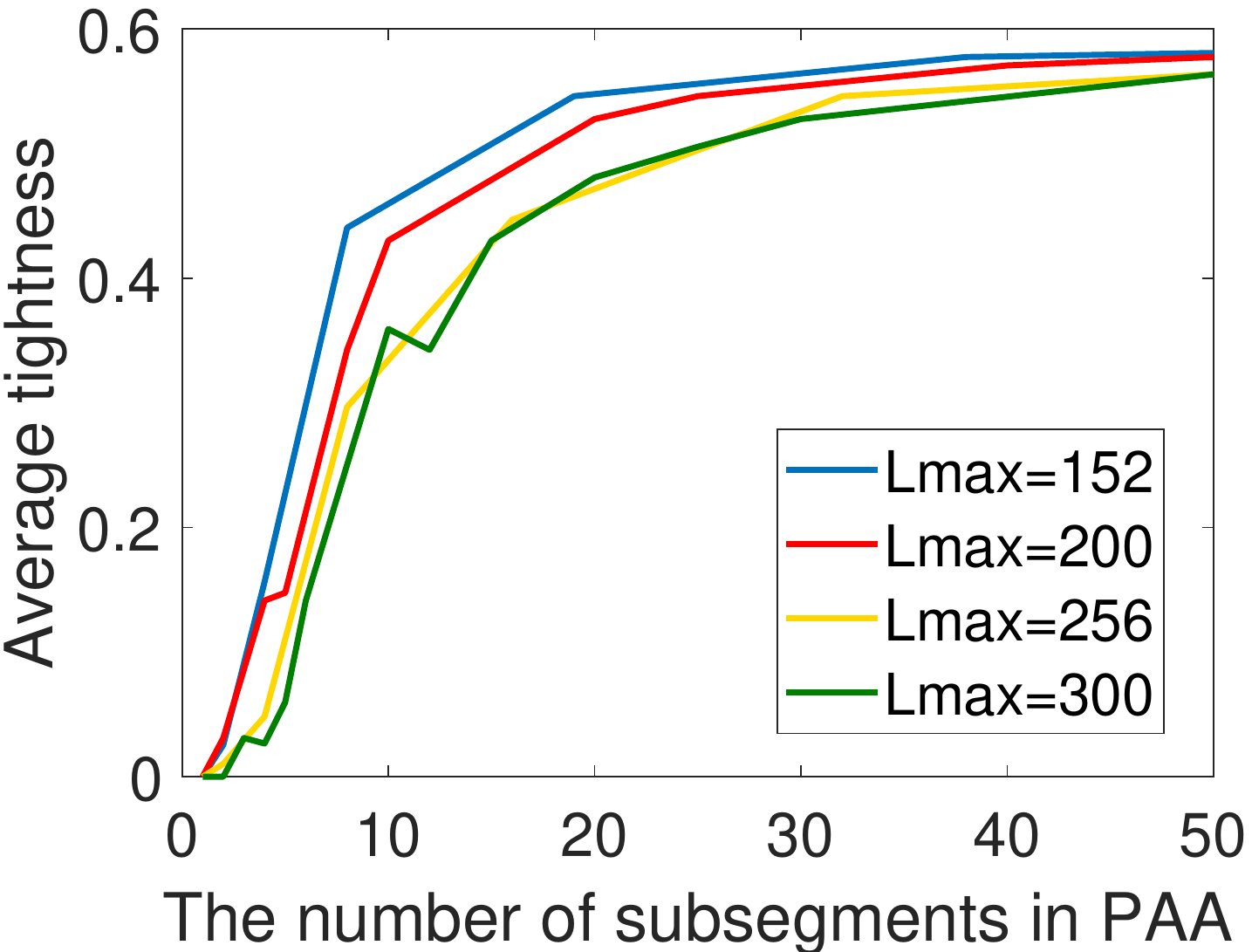}}\\
\subfloat[Yoga]{\includegraphics[width= 4.5 cm, height = 3.4 cm ]{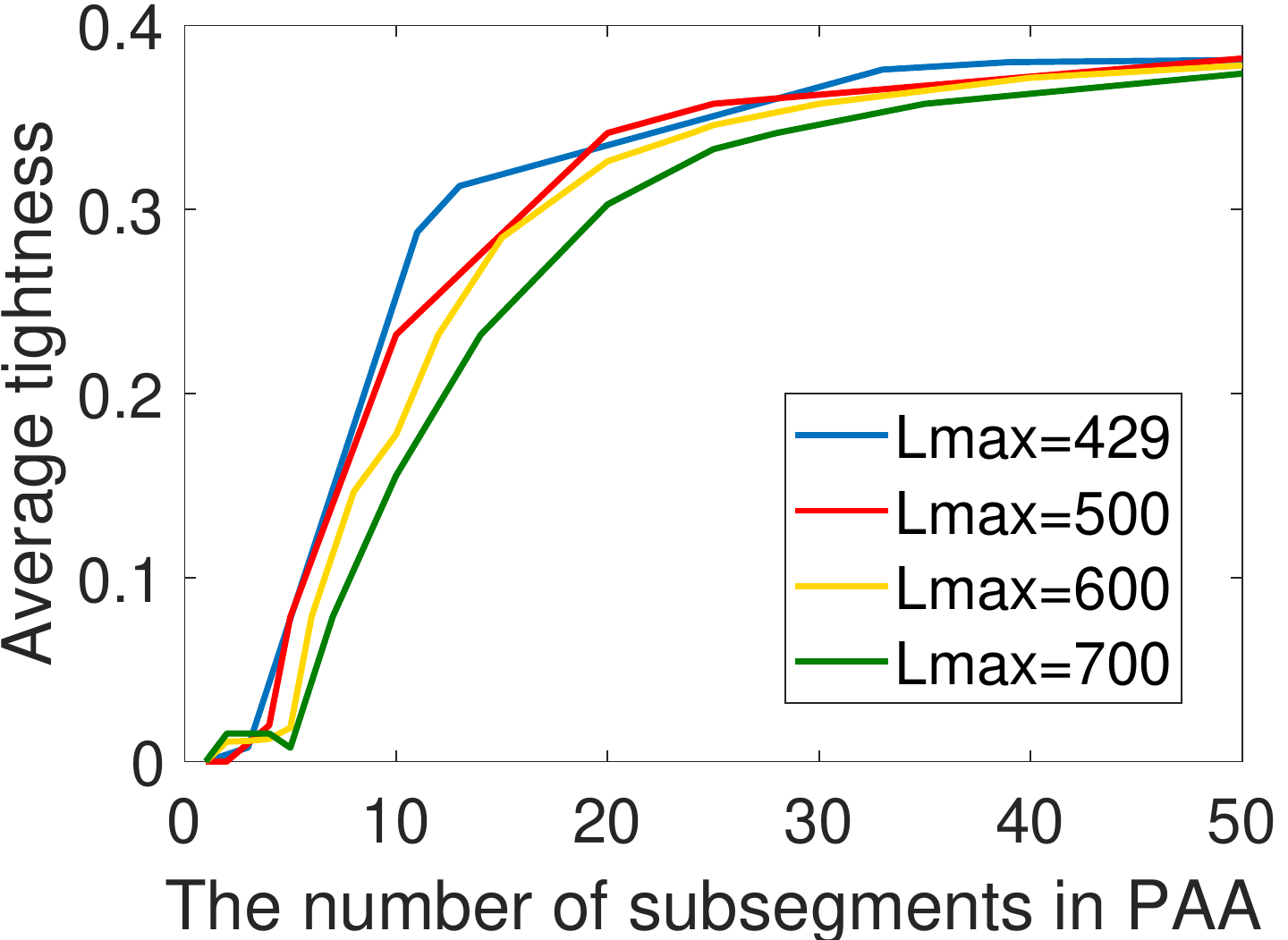}}
~\subfloat[Computers]{\includegraphics[width= 4.5 cm, height = 3.4 cm ]{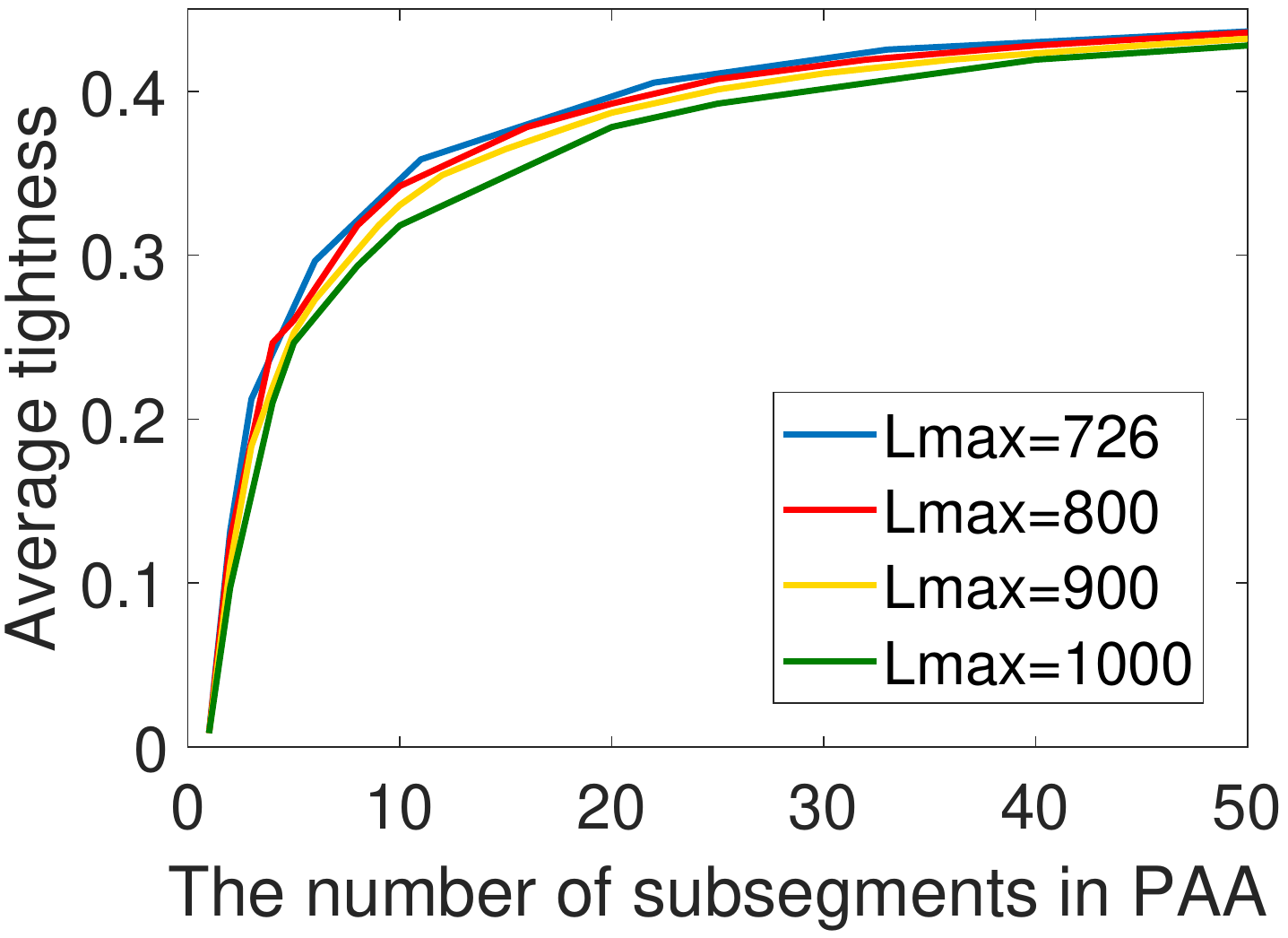}}
\caption{Effect of increasing sequence length on LB\_PAA.}
\label{fig:Effect of increasing length on LBPAA}
\end{figure}

\subsection{The effect of changing the warping window width}

In the subsection, we discuss the effect of changing the warping window width.
In order to represent time series with 16-dimensional vectors,
we can easily extend all the sequences of a dataset to the same length,
such that the length is the smallest integer multiple of 16.

Then, we calculate the average tightness of LB\_Keogh$^+$ and LB\_{PAA} under different constraints $r$ of Sakoe-Chiba band,
by repeating the experiments above.
We can see from Fig.~\ref{fig:change the warping window width},
with the increase of $r$, the tightness of LB\_Keogh$^+$ and LB\_{PAA} decreases gradually.
Because if $r$ changes from 10\%  to 20\%,
the area of the shadow parts in Fig.~\ref{fig:the lower bounding distance} will reduce.
Accordingly, LB\_Keogh$^+$ and LB\_{PAA} are bound to decrease.

\begin{figure}[!htbp]
\centering
\subfloat[Effect on LB\_{Keogh}$^{+}$]{\includegraphics[width= 4.3 cm, height = 3.5 cm ]{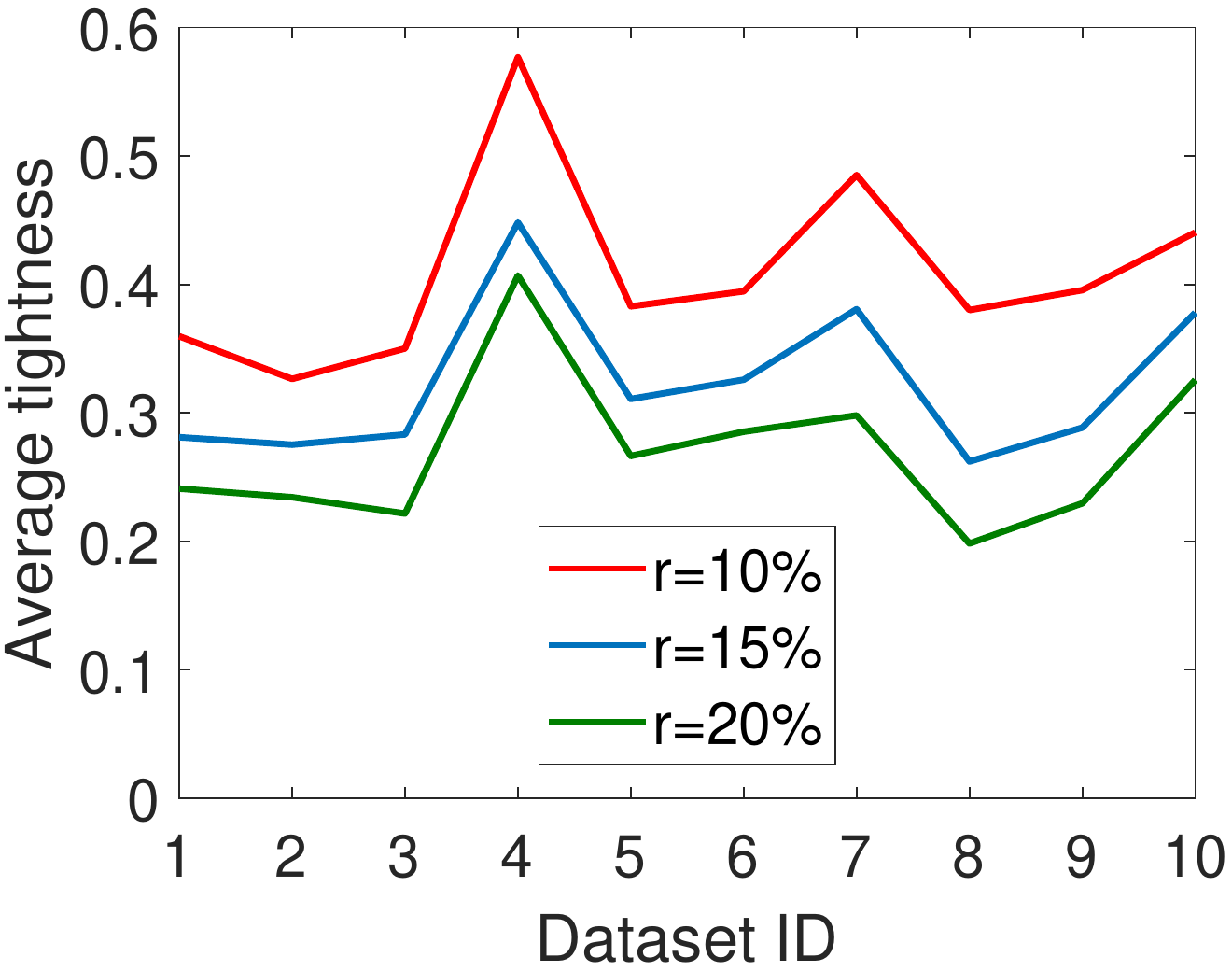}}
~\subfloat[Effect on LB\_PAA]{\includegraphics[width= 4.3 cm, height = 3.5 cm ]{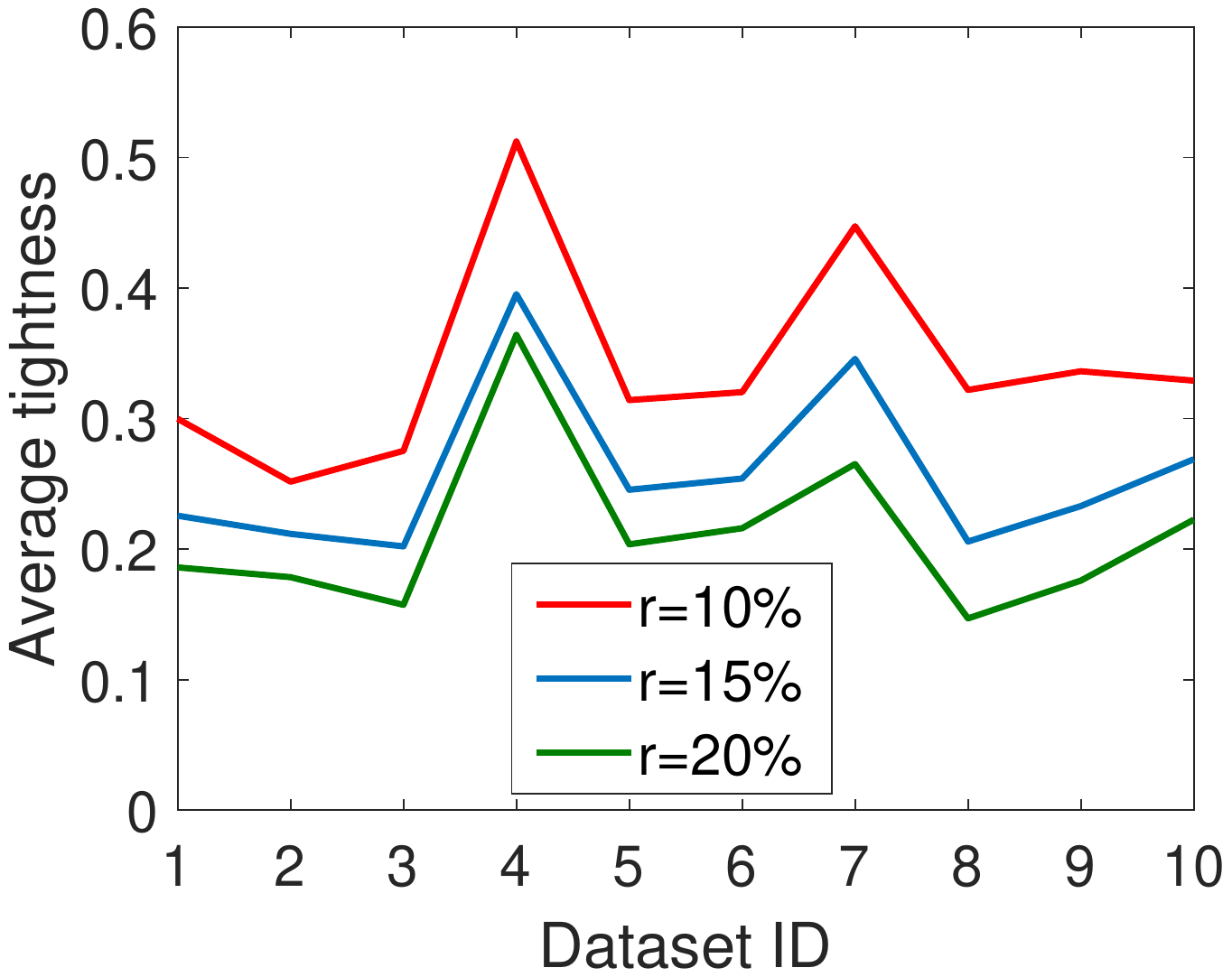}}\\
\caption{Experimental results of changing the warping window width.}
\label{fig:change the warping window width}
\end{figure}

\section{Conclusions and future work}\label{sec:conclusions}

In the paper, we propose a novel lower bounding distance LB\_Keogh$^+$,
which is a seamless combination of sequence extension and LB\_Keogh.
It can be used for unequal-length sequences and has low computational complexity.
Besides, LB\_Keogh$^+$ can extend sequences to an arbitrary suitable length,
such that the length is exactly an integer multiple of the desired number of subsegments in PAA.
Next, based on LB\_Keogh$^+$, an exact indexing of time series under DTW is devised.
Then, we introduce several theorems and complete the relevant proofs
to guarantee no false dismissals in similarity search.
Finally, extensive experiments are conducted to evaluate the proposed method on real-world datasets.
The experimental results indicate that
the proposed method can effectively perform similarity search of unequal-length time series
with high tightness and good pruning power.

Besides, there are still some interesting contents that deserve further study.
For instance, how to devise a kind of effective index for multivariate time series
is a promising research direction.


%

%

\ifCLASSOPTIONcompsoc
  \section*{Acknowledgments}
\else
  \section*{Acknowledgment}
\fi

Thanks to the donors who have made contributions to the benchmark datasets.
We would like to sincerely thank Eamonn Keogh, Chotirat Ann Ratanamahatana, \emph{etc}. for their inspiring work in this field.
This work is supported by the National Natural Science Foundation of China under Grant No. 61502521.

\ifCLASSOPTIONcaptionsoff
  \newpage
\fi

\bibliographystyle{IEEEtran}
\bibliography{SWC}



%
%
%

\begin{IEEEbiography}[{\includegraphics[width=1in,height=1.25in,clip,keepaspectratio]{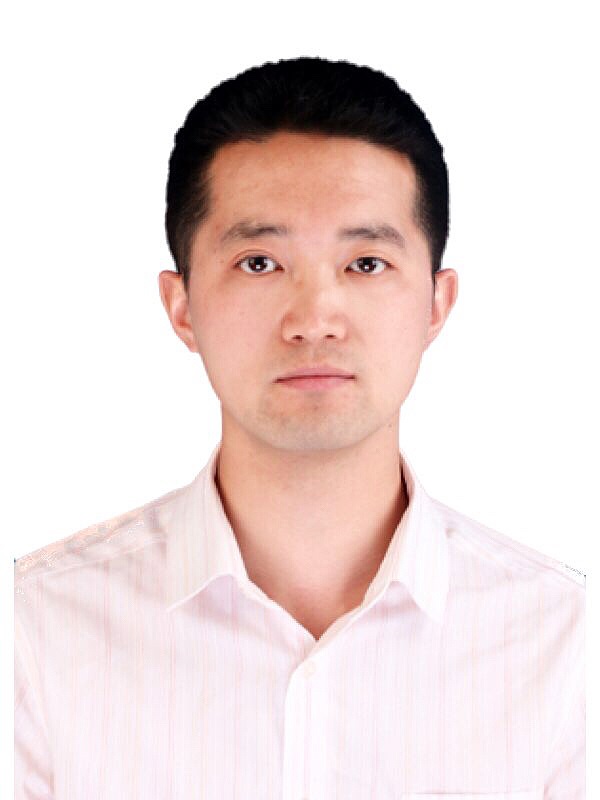}}]{Zhengxin Li}
received the Ph.D. degree in Control Science and Engineering from Air Force Engineering University, China in 2011.
He is currently a postdoctor in the School of Computer Science
and Center for OPTical IMagery Analysis and Learning (OPTIMAL),
Northwestern Polytechnical University.
His research interests mainly include time series pattern recognition, machine learning and data mining.
\end{IEEEbiography}

%






\end{document}